\def\ctop{$C$-siblings\condblank}
\def\qed{\hfill\squareforqed}
\let\ssref=\ref
\def\fref#1{Fig.~\ssref{#1}}
\def\cref#1{Condition~\ssref{#1}}
\def\Cref#1{Corollary~\ssref{#1}}
\def\eref#1{(\ssref{#1})}
\def\sref#1{Sect.~\ssref{#1}}
\def\lref#1{Lemma~\ssref{#1}}
\def\rref#1{Remark~\ssref{#1}}
\def\tref#1{Theorem~\ssref{#1}}
\def\dref#1{Definition~\ssref{#1}}
\def\pref#1{Proposition~\ssref{#1}}
\def\exref#1{Example~\ssref{#1}}
\def\thecomma{\ifx,\thenext \else\ifx;\thenext \else\ifx.\thenext
\else\ifx!\thenext \else\ifx:\thenext\else\ifx)\thenext \else \
\fi\fi\fi\fi\fi\fi}
\def\condblank{\futurelet\thenext\thecomma}
\def\ie{{\it i.e.}\condblank}
\newtheorem{assumption}[theorem]{Assumption}
\def\MM{\mathcal{M}}
\def\VVoldwd{\DD^{\,d}}
\def\WW{\mathcal{W}}
\def\FF{\mathcal{F}}
\def\GG{\mathcal{G}}
\def\NN{\mathcal{N}}
\def\DD{\mathcal{D}}
\def\BB{\mathcal{B}}
\def\TT{\mathcal{T}}
\def\SS{\mathcal{S}}
\def\JJ{\mathcal{J}}
\def\EE{\mathcal{E}}
\def\PP{\mathcal{P}}
\def\VV{\mathcal{V}}
\def\G{{\rm G}}
\def\P{{\rm {P}}}
\def\S{{\rm {S}}}
\def\1{{\bf 1}}
\def\argcdot{{\,\cdot\,}}
\let\phi=\varphi
\let\rho=\varrho
\def\real{{\bf R}}
\def \sumv{\sum_{v \in \TT }}
\def \sumvp{\sum_{v' \in \TT }}
\def \sumvd{\sum_{v \in \VVoldwd }}
\def\Label#1{}
\def\scomm#1#2{\ensuremath{\bigr\llbracket #1:#2\bigr\rrbracket}}
\begin{document}

\title{Controlling General Polynomial Networks}
\bigskip

\author{N. Cuneo\inst{1} and J.-P. Eckmann\inst{1}\fnmsep\inst{2}}
\institute{D\'epartement de Physique Th\'eorique, Universit\'e de
  Gen\`eve \and
Section de Math\'ematiques, Universit\'e de
  Gen\`eve}

\maketitle

\begin{abstract}
We consider networks of massive particles connected by
 non-linear springs. Some particles interact with heat baths at
different temperatures, which
are modeled as stochastic driving forces.
 The structure of the network is
arbitrary, but the motion of each particle is 1D.
For polynomial interactions, we give sufficient conditions
for H\"ormander's ``bracket condition''
to hold, which implies the
uniqueness of the steady state (if it exists), as well as the
controllability of the associated system in control theory. These
conditions are constructive; they are formulated in terms of
 inequivalence of the forces (modulo
translations) and/or conditions on the topology of  the
connections. We illustrate our results with
examples, including ``conducting chains'' of variable cross-section.
This then extends the results for a simple chain obtained in \cite{EPR1999} .
\end{abstract}

\tableofcontents
\section{Introduction}
We consider a network of interacting particles described
by an undirected graph $\GG = (\VV, \EE)$ with a set $\VV$ of vertices
 and a set $\EE$ of edges. Each vertex
represents a particle, and each edge represents a spring connecting
two particles. We single out a set $\VV_* \subset \VV$ of particles,
 each of which interacts with a heat bath.
We
address the question of when such a system has a unique stationary
state.  This question has been studied for several special cases:
Starting from a linear chain \cite{EPR1999,EPR1999b}, results have become
more refined in terms of the relation between the spring potentials
and the pinning potentials which tie the masses to the laboratory
frame \cite{LTh00,EH}.  This problem is very delicate, as is
apparent from the extensive study in \cite{Hairer2masses2009} for the
case of only 2 masses.

We provide conditions on the interaction potentials
that imply H\"ormander's ``bracket condition,'' from which it follows
that the semigroup associated to the process has a smoothing effect.
This, together with some stability assumptions, implies the
 \emph{uniqueness} of the stationary state. The {\em existence}
 is not discussed in this paper, but seems well
understood in the sense that the interaction potentials must be somehow
stronger than the pinning potentials. This issue will be explained in
a forthcoming paper \cite{EHR}.

Since the problem is known (see for example \cite{hairer_probabilistic_2005}
and \cite{bellet_ergodic_2006})
to be tightly related to the control problem where the stochastic
driving forces are replaced
with deterministic control forces, we shall use the
terminology of control theory,
and mention the implications of our results from the control-theoretic
viewpoint.

We work with unit masses and interaction
potentials that are polynomials of degree at least 3, and we say that two such
potentials  $V_1$ and $V_2$ have \emph{equivalent second derivative}  if there
is
a  $\delta\in \real$ such that
$V_1''(\argcdot) = V_2''(\argcdot+\delta)$.

We start with the set $\VV_*$ of particles
that interact with heat baths, and are therefore
{\em controllable}. One of our results (\Cref{c:aloneinequiv}) is formulated
as a condition  for some of the particles in the
set of first neighbors $\NN(\VV_*)$ of $\VV_*$ to be also controllable.
Basically, the condition is that
these particles must be ``inequivalent"
 in a sense that involves both the topology of their connections to
 $\VV_*$ and the corresponding interaction potentials. More precisely,
a sufficient condition for a particle $v\in \NN(\VV_*)$ to be controllable
is that for each other particle $w\in \NN(\VV_*)$ at least one of the
two conditions holds:
\begin{itemize}
\item[(a)] $v$ and $w$ are connected to $\VV_*$ in a topologically different
way,
\item[(b)] there is a particle $c$ in $\VV_*$ such that the
  interaction potential between $c$ and $v$ and that between $c$ and $w$
have  inequivalent second derivative.
\end{itemize}
It is then possible to use this condition recursively,
 taking control of more and more masses at each step (\tref{t:resultCk}).
If by doing so we can control all the masses in the graph, then
 H\"ormander's bracket condition holds.

In \sref{s:examples} we give examples of physically relevant networks
 whose controllability can be established using this method.

Our results imply in particular that connected graphs are controllable for
``almost all" choices
of the interaction potentials, provided that they
 are polynomials of degree at least 3 (\Cref{c:genericity}).

\section{The system}

We define a Hamiltonian for the graph  $\GG=(\VV, \EE)$ as follows.
Each particle $v \in \VV$ has position $q_v \in \real$
and momentum $p_v \in \real$ and is ``pinned down" by a potential
$U_v(q_v)$. Throughout, we assume the masses being 1, for simplicity
of notation.
See \rref{r:414} on how to adapt the results when the masses are not all equal.

We denote each edge $e\in \EE$ by $\{u, v\}$ (or equivalently $\{v, u\}$)
 where $u,v$ are the vertices adjacent to $e$.\footnote{Due to the physical
nature of the problem, we assume that the graph has no self-edge.} To each
edge $e=\{u, v\}$,
 we associate an interaction potential
$V_{uv}(q_u -q_v)$, or equivalently $V_{vu}(q_v-q_u)$ with
\begin{equ}\label{e:equivexpr}
V_{vu}(q_v-q_u) \equiv V_{uv}(q_u-q_v)~.
\end{equ}
Note that we do not require the potentials to be even functions; the
condition \eref{e:equivexpr}
just
 makes sure
that considering $e=\{u,v\}$ or $e=\{v, u\}$ leads to the same physical
interaction,
which is consistent with the fact that the edges are not oriented.

With the notation $q= (q_v)_{v\in \VV}$ and $p= (p_v)_{v\in \VV}$ the
Hamiltonian
 is then
\begin{equ}
  H(q, p)= \sum_{v\in \VV} \bigl( p_v^2/2 + U_v(q_v)\bigr) + \sum_{e \in
\EE}V_{e}(\delta q_e)~,
\end{equ}
where it is understood that $V_{e}(\delta q_e)$ denotes one of the
two expressions in \eref{e:equivexpr}.

Finally, we make the following  assumptions:
\begin{assumption}\label{assume}
\hfill\begin{enumerate}
\item All
functions are smooth. 
\item The level sets of $H$ are compact,  
\ie,  for each $K>0$ the set $\{(q, p)~|~H(q, p) \leq K\}$ is
compact. 
\item The function $\exp(-\beta H)$ is integrable for
some $\beta >0$.
\end{enumerate}
\end{assumption}
Each particle $v \in \VV_*$ is coupled to a heat bath at temperature $T_v >0$
with
coupling
 constant $\gamma_v >0$. For convenience, we set $\gamma_v = 0$ when
$v\notin \VV_*$.
The model is then described by the system of stochastic
differential equations
\begin{equa}\label{e:stoch}
  dq_v&= p_v\, dt~,\\\ dp_v &=-U_v'(q_v)dt-\partial_{q_v} \Bigl(\sum_{e \in
\EE}V_{e}(\delta q_e)\Bigr) dt -\gamma_v p_v dt
  +\sqrt{2T_v\gamma_v}\,d W_v(t)~,
\end{equa}
where the $W_v$ are identical independent Wiener processes.
The solutions to \eref{e:stoch}
form a Markov process. The generator of the associated semigroup is given by
\begin{equ}\Label{e:liouville}
L \equiv X_0 + \sum_{v\in \VV_*}\gamma_vT_v\partial_{p_v}^2~,
\end{equ}
with
\begin{equ}\Label{e:X0}
X_0 \equiv-\sum_{v\in \VV_*} \kern -0.2em\gamma_v p_v \partial_{p_v} + \sum_{v
\in \VV}\bigl({p_v
\partial_{q_v}}
-U'_v(q_v) \partial_{p_v}\bigr)  -\sum_{\{u,v\} \in \EE}\kern -0.5em
V'_{uv}(q_u-q_v)\cdot(\partial_{p_u}-\partial_{p_v})~.
\end{equ}
{}From now on, we assume that the interaction potentials $V_e$,
 $e\in \EE$ are \emph{polynomials of degree at least
3}. The condition on the degree means that we require throughout the presence
of non-harmonicities. The fully-harmonic case has been
described earlier \cite{EZ2004}, and the case where some but not all the
 potentials are harmonic is not covered here. We will show in a counter-example
(\exref{ex:anharmon}) that the non-harmonicities are really essential for our
results.
We make no assumption about the pinning potentials $U_v$; we do not require
them to be polynomials, and some of them may be identically
zero.

We work in the space $\real^{2|\VV|}$ with coordinates $x=(q, p)$.
 We identify the vector fields over $\real^{2|\VV|}$ and the corresponding
first-order differential operators in the usual way. This enables us to
consider
Lie algebras of vector fields over $\real^{2|\VV|}$, where the Lie bracket
$[\argcdot,\argcdot]$ is the usual commutator of two operators.

\begin{definition}We define $\MM$ as the smallest
Lie algebra that
\begin{itemize}
\item[(i)] contains $\partial_{p_v}$ for all $v\in \VV_*$,\footnote{Due to
the identification mentioned above, we view here $\partial_{p_v}$
 as a constant vector field over $\real^{2|\VV|}$.}
\item[(ii)] is closed under the operation $[\argcdot, X_0]$,
\item[(iii)] is closed under multiplication by smooth scalar functions.
\end{itemize}
By the definition
of a Lie algebra, $\MM$ is closed under
linear combinations and Lie brackets.
\end{definition}

\begin{definition}\label{d:controlnet}We say that a particle $v\in \VV$ is {\em
controllable}
if we have $\partial_{q_v}, \partial_{p_v} \in \MM$. We say that the network
$\GG$
 is {\em controllable} if all the particles are controllable, \ie, if
 \begin{equ}\label{e:touslespqdansM}
\partial_{q_v}, \partial_{p_v} \in \MM\qquad \text{ for all } v\in \VV~.
\end{equ}
\end{definition}

\emph{The aim of this paper is to give sufficient conditions on $\GG$
  and the interaction potentials, which guarantee that the network
  is controllable.}

If the network is controllable in the sense \eref{e:touslespqdansM}, then
H\"ormander's condition\footnote{The condition \eref{e:hormander} is slightly
different, but equivalent to the
usual statement of H\"ormander's criterion. This can be checked easily.
 In particular, closing $\MM$ under multiplication
by smooth scalar functions does
not alter the set in \eref{e:hormander}, and will be very convenient.}
\cite{Ho} holds: for all $x$, the vector fields $F\in\MM$
evaluated at $x$ span all of
$\real^{2|\VV|}$, \ie,
\begin{equ}\label{e:hormander}
\{F(x)~|~F\in \MM\} = \real^{2|\VV|}
 \quad \text{ for all } x\in \real^{2|\VV|}~.
\end{equ}

H\"ormander's condition implies that the transition probabilities of
the Markov process \eref{e:stoch}  are smooth,
 and that so is any invariant measure
(see for example \cite[Cor.~7.2]{bellet_ergodic_2006}).
We now briefly mention two implications of these smoothness properties.
\pref{p:propmeasure} and \pref{p:control}  below
 can be deduced from arguments similar to those exposed in
\cite{hairer_probabilistic_2005},
and will be discussed in more detail in the forthcoming paper \cite{EHR}.

\begin{proposition}\label{p:propmeasure} Under
  Assumption~\ref{assume}, 
if \eref{e:hormander} holds,
then the Markov
process \eref{e:stoch} has at most one
invariant probability measure.
\end{proposition}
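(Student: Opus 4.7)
The plan is to invoke Hörmander's hypoellipticity theorem together with a strong Feller plus topological irreducibility argument of Doob--Khasminskii type, following the strategy in \cite{hairer_probabilistic_2005, bellet_ergodic_2006}.

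First, under \eref{e:hormander} the generator $L$ satisfies Hörmander's bracket condition at every point of $\real^{2|\VV|}$, so both $L$ and its formal adjoint are hypoelliptic. Consequently, for each $t>0$ the transition kernel $P_t(x, dy)$ of the process \eref{e:stoch} admits a density $p_t(x, y)$ which is jointly smooth in $(x, y)$, and any invariant probability measure $\mu$, being a distributional solution of $L^{*}\mu = 0$, automatically possesses a smooth Lebesgue density on $\real^{2|\VV|}$.

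Second, the smoothness of $p_t(x, \argcdot)$ in $x$ yields the strong Feller property: for every $t>0$, $P_t$ maps bounded measurable functions to continuous ones. To complement this with irreducibility, I would invoke the Stroock--Varadhan support theorem, which identifies the topological support of $P_t(x, \argcdot)$ with the closure of the reachable set of the deterministic control system obtained from \eref{e:stoch} by replacing each $\sqrt{2T_v\gamma_v}\, dW_v$ by $u_v(t)\, dt$. Under the bracket condition and Assumption~\ref{assume}(2) (the compact level sets of $H$ prevent escape to infinity), one extracts a point that lies in the topological support of every invariant probability measure.

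With these two ingredients in hand, a classical Doob--Khasminskii argument closes the proof: strong Feller plus the existence of a point accessible from every initial condition forces any two ergodic invariant probability measures to share their support, and hence to coincide; ergodic decomposition then gives uniqueness in the full class of invariant probability measures.

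The main obstacle in this plan is the irreducibility step, i.e.\ showing that the accessibility set of the associated deterministic control system is rich enough (in the topological sense required by Doob's theorem), given only the pointwise bracket condition \eref{e:hormander}. In the present Hamiltonian setting this is nontrivial because one must control trajectories globally rather than only infinitesimally, and it is precisely this analysis that the authors defer to \cite{EHR}. The other two steps (hypoellipticity and strong Feller) are by now standard once the bracket condition is known, which is why the proposition can be stated as a direct consequence of \eref{e:hormander}.
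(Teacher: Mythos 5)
The paper does not actually supply a proof of this proposition. Immediately above the statement it writes that H\"ormander's condition implies smoothness of the transition probabilities and of any invariant measure, and then says that \pref{p:propmeasure} and \pref{p:control} ``can be deduced from arguments similar to those exposed in \cite{hairer_probabilistic_2005}, and will be discussed in more detail in the forthcoming paper \cite{EHR}.'' So there is no in-paper argument to compare against; the result is asserted with pointers to the literature.

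That said, your outline is exactly the standard route those references take: H\"ormander's bracket condition gives hypoellipticity of $\partial_t - L$ and $\partial_t - L^*$, hence smooth transition densities and smooth densities for any invariant measure; smoothness of the kernel gives the strong Feller property; the Stroock--Varadhan support theorem ties the topological support of $P_t(x,\argcdot)$ to the reachable set of the control system \eref{e:controlsys}; and a Doob--Khasminskii argument turns ``strong Feller $+$ a common accessible point'' into uniqueness. You also correctly identify the genuine gap: H\"ormander's condition by itself is a pointwise, infinitesimal statement and does not directly yield the global accessibility needed for the support/irreducibility step. This is indeed what the authors defer to \cite{EHR}, and it is also what makes the companion statement \pref{p:control} nontrivial. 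One small caveat: your appeal to Assumption~\ref{assume}(2) (compactness of the level sets of $H$) as ``preventing escape to infinity'' is more naturally an ingredient for \emph{existence} and recurrence than for uniqueness; the role the Hamiltonian structure plays in the actual accessibility argument of \cite{hairer_probabilistic_2005} is subtler than mere confinement, and it is precisely the part you are right to flag as unresolved here.
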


The control-theoretic problem corresponding to \eref{e:stoch} is
the system of ordinary differential equations
\begin{equa}[1][e:controlsys]
  \dot{q}_v&= p_v,\\\ \dot{p}_v &=-U_v'(q_v)-\partial_{q_v} \Bigl(\sum_{e \in
\EE}V_{e}(\delta q_e)\Bigr)
  +( u_v(t)-\gamma_v p_v)\cdot{\bf 1}_{v\in \VV_*}~,
\end{equa}
where for each $v\in \VV_*$, $u_v: \real \to \real$ is a smooth {\em control
function} (\ie,
the stochastic driving forces have been replaced with deterministic
functions).\footnote{Whether
or not we keep the dissipative terms $-\gamma_v p_v$ in \eref{e:controlsys}
makes no difference
 since they can always be absorbed in the control functions.}

\begin{proposition}\label{p:control}Under the hypotheses of
\pref{p:propmeasure}, the system
 \eref{e:controlsys} is {\em controllable}
in the sense that for each $x^{(0)} = (q^{(0)}, p^{(0)})$ and $x^{(f)}= (q^{(f)}, p^{(f)})$,
there are a time $T$ and some smooth controls $u_v$, $v\in \VV_*$,
such that the solution $x(t) $ of \eref{e:controlsys} with $x(0) =x^{(0)}$ verifies
$x(T) =x^{(f)}$.
\end{proposition}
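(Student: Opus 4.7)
The plan is to reduce \eref{e:controlsys} to a standard control-affine system and then invoke two classical ingredients: H\"ormander's condition (which we already have by \pref{p:propmeasure}'s hypothesis) and Poincar\'e recurrence (which follows from the Hamiltonian structure). First I would absorb each dissipative term $-\gamma_v p_v$ into the control function by setting $\tilde u_v(t) = u_v(t) - \gamma_v p_v(t)$, as the footnote to \eref{e:controlsys} already notes. The system then reads
\[
\dot x = X_H(x) + \sum_{v\in \VV_*} \tilde u_v(t)\,\partial_{p_v}~,
\]
with drift the Hamiltonian vector field $X_H$ of $H$ and control vector fields the constant $\partial_{p_v}$, $v \in \VV_*$.

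Next I would verify that H\"ormander's condition \eref{e:hormander} is precisely the Lie algebra rank condition (LARC) for this system at every point: closing $\MM$ under multiplication by smooth scalars does not affect the pointwise span, so \eref{e:hormander} says exactly that the Lie algebra generated by $X_H$ and the $\partial_{p_v}$ via iterated brackets spans $\real^{2|\VV|}$ at every $x$. By the Sussmann--Jurdjevic accessibility theorem, LARC implies that the forward reachable set from any initial point has non-empty interior.

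To pass from accessibility to full controllability, I would exploit the Hamiltonian character of $X_H$. Since $X_H$ preserves $H$ and, by Assumption~\ref{assume}(2), the level sets of $H$ are compact, the Liouville measure restricted to each level set is a finite measure invariant under the flow of $X_H$. Poincar\'e's recurrence theorem then shows that $X_H$ is Poisson stable. A standard result in geometric control theory (see e.g.\ Jurdjevic, \emph{Geometric Control Theory}) then yields that a control-affine system satisfying LARC everywhere on a connected manifold and having Poisson-stable drift is exactly (globally) controllable in finite time with smooth controls.

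The step I expect to be the most delicate is this last one: the rigorous passage from accessibility to exact controllability. The underlying idea is that Poisson stability of $X_H$ provides an approximate time reversal of the drift---its orbits return arbitrarily close to any starting point---which, combined with the openness of the accessibility sets coming from LARC, allows any target to be reached exactly in finite time. Smoothness of the controls then follows from standard density arguments. As the authors indicate, the full details will appear in \cite{EHR}; here it suffices to quote the standard statement.
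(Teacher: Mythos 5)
Your proposal is essentially correct. Note that the paper itself provides no proof of this proposition---it defers to \cite{hairer_probabilistic_2005} and the forthcoming \cite{EHR}, and cites \cite[Thm.~3.3]{jurdjevic_geometric_1997} only for accessibility (nonempty interior of reachable sets)---but the route you sketch is the standard one those references rely on: absorb $-\gamma_v p_v$ into the controls, recognize the resulting drift as the Hamiltonian vector field $X_H$ of $H$, use Assumption~\ref{assume}(2) to get a finite invariant Liouville measure on each (compact) level set, invoke Poincar\'e recurrence to conclude that $X_H$ is Poisson stable, and then apply the controllability theorem for control-affine systems with full Lie rank and Poisson-stable drift. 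One small point worth tightening: Poincar\'e recurrence only gives recurrence of \emph{almost every} point with respect to the Liouville measure, whereas the controllability theorem is typically stated under the hypothesis that the nonwandering set of the drift is the whole manifold. This gap is closed by observing that the restricted Liouville measure is positive on relatively open subsets of each level set, so the recurrent points are dense and hence every point is nonwandering for $X_H$; together with the Lie algebra rank condition at every $x$ this yields exact controllability, and smoothing of piecewise constant controls is a standard density argument, as you note.
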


In fact, \eref{e:hormander} is a well-known condition in control theory.
See for example \cite{jurdjevic_geometric_1997}, which addresses the case
of piecewise constant control functions. In particular, \eref{e:hormander}
implies by
\cite[Thm.~3.3]{jurdjevic_geometric_1997} that for every initial condition
$x^{(0)}$
and each time $T>0$, the set $A(x^{(0)}, T)$ of all points that are accessible at
time $T$
(by choosing appropriate controls) is connected and full-dimensional.

\section{Strategy}\label{s:commutators}

We want to show that $\partial_{q_v},\partial_{p_v} \in \MM$ for all
$v\in \VV$.  The next lemma shows that we only need to worry about
 the $\partial_{p_v}$.

\begin{lemma}\label{l:piimpliesqi} Let  $A$ be a subset of\/ $\VV$.
\begin{equ}
\text{If }~~\sum_{v\in A} \partial_{p_v} \in \MM ~~\text{ then }~~
\sum_{v\in A} \partial_{q_v} \in \MM~.
\end{equ}
\end{lemma}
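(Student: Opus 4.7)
The plan is to exploit the closure of $\MM$ under bracketing with $X_0$ (property (ii) of the definition). Given the hypothesis $\sum_{v\in A}\partial_{p_v}\in\MM$, I would compute $[\sum_{v\in A}\partial_{p_v},X_0]$ directly from the explicit form \eref{e:X0}, and show that the result differs from $\sum_{v\in A}\partial_{q_v}$ only by terms that already lie in $\MM$ via properties (i) and (iii).

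Concretely, for a single $v$, the only summands of $X_0$ whose coefficients depend on $p_v$ are $p_v\partial_{q_v}$ (from the kinetic sum) and, when $v\in\VV_*$, the dissipation term $-\gamma_v p_v\partial_{p_v}$; the pinning and interaction contributions depend on $q$ alone and therefore commute with $\partial_{p_v}$. A direct calculation then gives
\begin{equ}
[\partial_{p_v},X_0] \;=\; \partial_{q_v} \,-\, \gamma_v\partial_{p_v}\cdot {\bf 1}_{v\in\VV_*}~.
\end{equ}
Summing over $v\in A$ yields
\begin{equ}
\Bigl[\sum_{v\in A}\partial_{p_v},\,X_0\Bigr] \;=\; \sum_{v\in A}\partial_{q_v} \;-\; \sum_{v\in A\cap\VV_*}\gamma_v\partial_{p_v}~.
\end{equ}

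The left-hand side lies in $\MM$ by hypothesis and property (ii). On the right-hand side, each $\partial_{p_v}$ with $v\in A\cap\VV_*$ belongs to $\MM$ by property (i), and multiplication by the constants $\gamma_v$ is allowed by property (iii); hence the sum $\sum_{v\in A\cap\VV_*}\gamma_v\partial_{p_v}$ is in $\MM$. Adding it to both sides isolates $\sum_{v\in A}\partial_{q_v}\in\MM$, which is the conclusion.

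There is no real obstacle here: the only subtlety is to recognize that the ``spurious'' $-\gamma_v\partial_{p_v}$ produced by the dissipation appears precisely for the heat-bath vertices $v\in\VV_*$, for which $\partial_{p_v}$ is already available as a generator of $\MM$, so these terms can always be absorbed. Morally, the lemma just says that Hamilton's equation $\dot q_v=p_v$ lets us trade a momentum direction for the corresponding position direction via one commutator with the drift.
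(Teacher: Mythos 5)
Your proof is correct and is essentially the same as the paper's: both compute $\bigl[\sum_{v\in A}\partial_{p_v},X_0\bigr]=\sum_{v\in A}\partial_{q_v}-\sum_{v\in A\cap\VV_*}\gamma_v\partial_{p_v}$ and then absorb the spurious dissipation terms using the fact that $\partial_{p_v}\in\MM$ for $v\in\VV_*$. The only cosmetic difference is that you cite closure under scalar multiplication (property (iii)) to handle the constants $\gamma_v$, whereas the paper simply invokes the linear structure of $\MM$; both justifications are fine.
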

\begin{proof} Assuming $\sum_{v\in A} \partial_{p_v}\in\MM$, we find that
\begin{equ}\label{e:sommevina}
\bigr[\sum_{v\in A} \partial_{p_v}, {X}_0\bigr] = \sum_{v\in A} \partial_{q_v}
-
\sum_{v\in \VV_*\cap A} \gamma_v \partial_{p_v}
\end{equ}
is in $\MM$. But since $\partial_{p_v} \in \MM$ for all $v\in \VV_*$,
 the linear structure of $\MM$ implies
 $\sum_{v\in \VV_*\cap A} \gamma_v \partial_{p_v}
 \in \MM$. Adding this to the vector field in \eref{e:sommevina}
shows that\hfill\break  $\sum_{v\in A} \partial_{q_v} \in \MM$, as claimed.
\qed\end{proof}

\begin{definition}
We say that a set $A \subset \VV$ is \emph{jointly controllable} if
$\sum_{v\in A}\partial_{p_v}$ is in  $\MM$ (and therefore, also
$\sum_{v\in A}\partial_{q_v}$ by \lref{l:piimpliesqi}).
\end{definition}

Requiring all the particles in a set $A$ to be (individually) controllable is
 stronger than asking the set $A$ to be jointly controllable (indeed, if all
the $\partial_{p_v}$,
$v\in A$ are in $\MM$, then so is their sum). We will
obtain
jointly controllable sets and then ``refine" them until we control particles
individually.

The strategy is as follows. In the next section, we start with a controllable
particle $c$,
and show that its first neighbors split into jointly
controllable sets. Then, in \sref{s:controlnet}, we consider several
controllable
particles, and basically intersect
the jointly controllable sets obtained for each of them, in order to control
``new" particles individually.
 Finally, we iterate this procedure, taking control of more particles at each
step,
 until we establish (under some conditions) the controllability of the whole
network.

\begin{remark}Observe in the following that our results neither involve the
pinning potentials $U_v$ nor the coupling constants $\gamma_v$.
\end{remark}

\section{The neighbors of one controllable particle}\label{s:neighcontroller}

We consider in this section a particle $c \in \VV$, and denote by $\TT^c$ the
set of its first neighbors (the ``targets").
 The following notion of equivalence
among polynomials enables us to split $\TT^c$ into equivalence
classes.

\begin{definition}
Two polynomials $f$ and $g$ are called \emph{equivalent} if there
is a $\delta \in \real$
such that $f(\argcdot)=g(\argcdot+\delta )$.
\end{definition}

\begin{definition}
We say that two particles $v, u\in \TT^c$ are equivalent (with respect to $c$)
if the two polynomials $V''_{cv}$ and $V''_{cu}$ are equivalent.
\end{definition}

Since this
relation is symmetric and transitive, the set $\TT^c$ is naturally
decomposed into a disjoint union of equivalence classes:
\begin{equ}\Label{e:tti}
  \TT^c=\cup_i \TT^c_i~.
\end{equ}
An explanation of why we use the second derivative of the potentials
instead of the first one (\ie, the force) will be given in
\exref{ex:limitesderivee}.
The main result of this section is
\begin{theorem}\label{t:mainprop}
Assume that $c$ is controllable. Then, each equivalence class $\TT^c_i$ is
jointly
controllable, \ie,
\begin{equ}\label{e:vectorfield}
  \sum_{v\in\TT^c_i} \partial_{p_v} \in \MM~\quad \text{ for all }i~.
\end{equ}
Furthermore, there are constants $\delta _{cv}$ such that for all $i$,
\begin{equ}\label{e:vectorfieldx}
  \sum_{v\in\TT^c_i} (q_c-q_v+\delta _{cv})\partial_{p_v} \in \MM~.
\end{equ}
\end{theorem}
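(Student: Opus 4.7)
Since $c$ is controllable, $\partial_{p_c},\partial_{q_c}\in\MM$. Computing $[\partial_{q_c},X_0]$---only the pinning term $-U'_c(q_c)\partial_{p_c}$ and the edges $\{c,v\}$ of $X_0$ depend on $q_c$---one obtains
\[
[\partial_{q_c},X_0]=-\Bigl(U''_c(q_c)+\sum_{v\in\TT^c}V''_{cv}(s_v)\Bigr)\partial_{p_c}+\sum_{v\in\TT^c}V''_{cv}(s_v)\partial_{p_v},\qquad s_v\equiv q_c-q_v.
\]
Since $\partial_{p_c}\in\MM$ and $\MM$ is closed under smooth-function multiplication, the first summand lies in $\MM$ and subtracts off, yielding $W_0\equiv\sum_{v\in\TT^c}V''_{cv}(s_v)\partial_{p_v}\in\MM$. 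Iterating $[\partial_{q_c},\cdot]$ then produces $W_k\equiv\sum_{v}V^{(k+2)}_{cv}(s_v)\partial_{p_v}\in\MM$ for every $k\ge0$ (only finitely many being nonzero, since the $V_{cv}$ are polynomials).

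The heart of the argument is to separate the contributions of different equivalence classes. For $v\in\TT^c_i$ with representative $v_i$ and shift $\delta_{cv}$, one has $V^{(k+2)}_{cv}(s_v)=V^{(k+2)}_{cv_i}(\tilde s_v)$ where $\tilde s_v\equiv s_v+\delta_{cv}$, so within a class the shapes agree while across classes they are inequivalent polynomials. To extract information distinguishing the classes I form the double bracket $[W_j,[W_k,X_0]]$. Decomposing $[W_k,X_0]=A_k-\sum_{v\in\VV_*}\gamma_v V^{(k+2)}_{cv}\partial_{p_v}-p_c W_{k+1}+\tilde W_k$, with $A_k=\sum_v V^{(k+2)}_{cv}(s_v)\partial_{q_v}$ and $\tilde W_k=\sum_v p_v V^{(k+3)}_{cv}(s_v)\partial_{p_v}$, and noting that the $\gamma$- and $p_c W_{k+1}$-pieces commute with $W_j$, a direct computation gives
\[
[W_j,[W_k,X_0]]=\sum_{v\in\TT^c}\bigl(V^{(j+2)}_{cv}V^{(k+2)}_{cv}\bigr)'(s_v)\,\partial_{p_v}\in\MM.
\]
Repeated bracketing with $\partial_{q_c}$ raises the derivative order, and subtracting suitable scalar multiples of the earlier $W_m$'s eliminates the $s_v$-dependence at the top polynomial order. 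What remains is a vector field $\sum_v\mu_v\partial_{p_v}\in\MM$ whose constant coefficient $\mu_v$ is a shift-invariant polynomial expression in the coefficients of $V_{cv}$, and hence depends only on the equivalence class of $v$. By varying $j,k$ and the pattern of iterations, one produces a family of such ``class-constant'' vector fields; by the inequivalence hypothesis the resulting matrix indexed by classes has full rank $m$, so inversion isolates each $T_{i,0}\equiv\sum_{v\in\TT^c_i}\partial_{p_v}\in\MM$, which is \eref{e:vectorfield}.

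Once the $T_{i,0}$'s lie in $\MM$, the same class-separation procedure applied to each $W_k$ yields the restrictions $W_k^{(i)}\equiv\sum_{v\in\TT^c_i}V^{(k+2)}_{cv_i}(\tilde s_v)\partial_{p_v}\in\MM$. Since $\{V^{(k+2)}_{cv_i}\}_{k=0,\dots,d_i-2}$ is a basis for the polynomials of degree $\le d_i-2$ in $\tilde s$, a scalar linear combination of the $W_k^{(i)}$'s produces $\sum_{v\in\TT^c_i}\tilde s_v\partial_{p_v}\in\MM$, namely \eref{e:vectorfieldx} (as a consistency check, bracketing this with $\partial_{q_c}$ recovers $T_{i,0}$). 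The principal obstacle is the class-separation step: one must verify that the iterated brackets generate enough linearly independent shift-invariants to span the $m$-dimensional class space. This is where the inequivalence hypothesis enters in full strength, and is most naturally handled by passing to the shift-centered normal form of each $V''_{cv_i}$ (in which the second-highest coefficient vanishes), so that the invariants needed to distinguish the classes can be read off directly.
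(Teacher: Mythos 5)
The setup is right and matches the paper's opening moves: computing $[\partial_{q_c},X_0]$ and subtracting off the $\partial_{p_c}$-component to get $W_0=\sum_{v}V''_{cv}(s_v)\partial_{p_v}\in\MM$, iterating $[\partial_{q_c},\cdot]$ to raise the derivative order, and noting the product-rule identity for double brackets $[W_j,[W_k,X_0]]=\sum_v(V^{(j+2)}_{cv}V^{(k+2)}_{cv})'(s_v)\partial_{p_v}$ (which is precisely \lref{l:doublecomm} of the paper). You have also correctly identified the shift-centered ``adjusted'' form as the right normalization.

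However, the proof has a genuine gap exactly where you yourself flag ``the principal obstacle.'' The claim that ``by the inequivalence hypothesis the resulting matrix indexed by classes has full rank $m$, so inversion isolates each $T_{i,0}$'' is not an argument; it is the statement to be proved, and you offer no mechanism for it. In fact this is where essentially all the real work of the proof lies, and it is far from automatic: (1) particles with different degrees $d_v$ of $V''_{cv}$ cannot be separated by the same scheme, and the paper handles this by an explicit recursion on degree, dealing first with $\DD^{d}\equiv\{v:d_v=d_{\max}\}$, ``projecting out'' the lower-degree components via the operator $\G v\equiv(xv)'$ (which is diagonal with distinct eigenvalues $1,\dots,d+1$ in the basis $x^j/j!$), and then restarting with $\TT^c\setminus\DD^{d}$; (2) within a fixed degree, the separation by the leading coefficient $b_{vd}$ requires an explicit Vandermonde inversion (\lref{l:vandermonde}) applied to the family $\{b_{vd}^{\,i}\}_{i\ge1}$; (3) the separation by lower-order coefficients $b_{v\ell}$, $\ell<d$, requires the operator algebra $\FF=\mathrm{span}\{\G,\dots,\G^{d+1}\}$ to construct each projector $\S_\ell$ and a further Vandermonde step; and (4) one still needs the ``multiplication'' facts \eref{e:cisimpliescizis}--\eref{e:cisimpliesproducts} to intersect the resulting sets. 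None of these steps appear in your sketch, and ``varying $j,k$ and the pattern of iterations'' does not obviously reproduce them.

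The treatment of \eref{e:vectorfieldx} has a secondary problem. You claim ``the same class-separation procedure applied to each $W_k$ yields the restrictions $W_k^{(i)}\in\MM$,'' but the procedure you describe for the first conclusion produces only constant-coefficient fields, not the restricted diagonal fields $\sum_{v\in\TT^c_i}V^{(k+2)}_{cv_i}(\tilde s_v)\partial_{p_v}$. A direct double bracket of $T_{i,0}$ with $W_k$ gives $\sum_{v\in\TT^c_i}V^{(k+3)}_{cv_i}(\tilde s_v)\partial_{p_v}$, so you lose one derivative; for a class with $\deg V''_{cv_i}=1$ this only recovers $T_{i,0}$ itself and never produces $\sum_{v\in\TT^c_i}\tilde s_v\partial_{p_v}$. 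The paper closes this via \eref{e:cisimpliescizis} (when $d\ge2$) and by a separate elementary argument for $d=1$ (the step \eref{e:d1}); your sketch has no analogue. In short: the architecture is right, but the core combinatorial engine -- adjusted representation, $\G$-algebra, Vandermonde, degree recursion -- is the content of the theorem and cannot be replaced by an appeal to ``full rank by inequivalence.''
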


The second part of the theorem will be used in the next section to
intersect
 the equivalence classes $\TT^c_i$ of several
 controllable particles $c$.
We will now prepare the proof of \tref{t:mainprop}. We assume in the remainder
of
 this section that $c$ is controllable. And since $c$
is fixed, we write $\TT$ and $\TT_i$ instead of $\TT^c$ and $\TT^c_i$.

\begin{lemma}\Label{l:startingvectorsinM}
We have
\begin{equ}\label{e:partial}
\sumv V''_{cv}(q_c-q_v)\partial_{p_v}\in \MM~.
\end{equ}
\end{lemma}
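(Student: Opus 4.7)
The approach is a single commutator calculation, exploiting the fact that $\partial_{p_c}, \partial_{q_c} \in \MM$ (since $c$ is controllable by assumption), together with the closure of $\MM$ under multiplication by smooth scalar functions. The idea is to commute $\partial_{q_c}$ with $X_0$ and then clean up the result by subtracting off the unwanted multiples of $\partial_{p_c}$.

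First I would compute $[\partial_{q_c}, X_0]$, which lies in $\MM$ by property (ii). Since $\partial_{q_c}$ has constant coefficients, the commutator just applies $\partial_{q_c}$ to each coefficient of $X_0$. Only the terms whose coefficients depend on $q_c$ contribute: the pinning term $-U_c'(q_c)\partial_{p_c}$, and the interaction terms $-V'_{cw}(q_c-q_w)(\partial_{p_c}-\partial_{p_w})$ associated to the edges $\{c,w\}$ with $w\in\TT$. This yields
\begin{equ}
[\partial_{q_c}, X_0] \;=\; -U_c''(q_c)\,\partial_{p_c} \;-\; \sum_{w\in\TT} V''_{cw}(q_c-q_w)\bigl(\partial_{p_c}-\partial_{p_w}\bigr)~.
\end{equ}

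Next I would remove the $\partial_{p_c}$ piece. Because $\partial_{p_c}\in\MM$ and $\MM$ is closed under multiplication by smooth scalar functions, the vector field $\bigl(U_c''(q_c)+\sum_{w\in\TT}V''_{cw}(q_c-q_w)\bigr)\partial_{p_c}$ lies in $\MM$. Adding this to the commutator above produces precisely $\sum_{w\in\TT} V''_{cw}(q_c-q_w)\,\partial_{p_w}$, which is \eref{e:partial}.

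The only mild subtlety is bookkeeping for each unordered edge, but this is resolved by the identity $V''_{uv}(q_u-q_v) = V''_{vu}(q_v-q_u)$ that follows from \eref{e:equivexpr}, so the contribution of each edge $\{c,w\}$ is the unambiguous term $V''_{cw}(q_c-q_w)$. There is no real obstacle: the lemma is essentially a direct consequence of the definition of $\MM$ once the controllability of $c$ is in hand, and this computation is the seed from which the rest of \tref{t:mainprop} will be grown by taking further Lie brackets.
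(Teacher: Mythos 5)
Your proof is correct and follows essentially the same route as the paper's: both compute $[\partial_{q_c}, X_0]$, which yields $-U_c''(q_c)\partial_{p_c} - \sum_{v\in\TT} V''_{cv}(q_c-q_v)(\partial_{p_c}-\partial_{p_v})$, and then strip off the $\partial_{p_c}$ part using closure of $\MM$ under multiplication by smooth functions. The only cosmetic difference is that you invoke $\partial_{q_c}\in\MM$ directly from the definition of controllability, whereas the paper deduces it from $\partial_{p_c}\in\MM$ via Lemma~\ssref{l:piimpliesqi}; these are interchangeable here.
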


\begin{proof} From \lref{l:piimpliesqi}
we conclude that
  $\partial_{q_c} \in \MM$. Therefore, we find that
\begin{equ}\Label{e:first}
[\partial_{q_c}, {X}_0] = -U_c''(q_c)\partial_{p_c} -  \sumv
V''_{cv}(q_c-q_v)\cdot(\partial_{p_c}-\partial_{p_v})
\end{equ}
is in $\MM$. Now, since $\partial_{p_c} \in \MM$ and since $\MM$ is closed
 under multiplication by scalar functions, we can subtract all the
contributions
that are along $\partial_{p_c}$ and obtain \eref{e:partial}.
\qed\end{proof}

We need a bit of technology to deal
with equivalent polynomials.
\begin{definition}Let $g(t) = \sum_{i=0}^k a_i {t^i / i!}$ be a polynomial
of degree $k\geq 1$. If $a_{k-1} = 0$, we say that $g$ is {\em adjusted}.
As can be checked, the polynomial $\tilde g(\argcdot)\equiv g(\argcdot
-a_{k-1}/a_k)$
 is always adjusted, and is referred to as the {\em adjusted representation} of
$g$.\end{definition}

Observe that a polynomial and its adjusted representation are by construction
equivalent and have the same degree and the same leading coefficient. In
fact, given a polynomial $g$ of degree $k\geq 1$, $\tilde{g}$ is the only
 polynomial equivalent to $g$ that is adjusted.
This adjusted representation will prove to be very
useful thanks to the following obvious
\begin{lemma}\label{l:45}
Two polynomials $g$ and $h$ of degree at least 1
 are equivalent iff\, $\tilde{g} =\tilde{h}$.
\end{lemma}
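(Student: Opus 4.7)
The plan is to reduce the claim directly to the uniqueness statement that was asserted just before the lemma: for a polynomial $g$ of degree $k\ge 1$, the adjusted representation $\tilde g$ is the \emph{only} polynomial equivalent to $g$ that is adjusted. Granted this, both directions become essentially tautological, so the lemma really is little more than repackaging.

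For the ``if'' direction I would simply observe that $g$ is equivalent to $\tilde g$ by construction, and similarly $h$ is equivalent to $\tilde h$. Assuming $\tilde g=\tilde h$, transitivity of the relation (which is an equivalence relation, since a translation composed with a translation is a translation, and translations are invertible) then yields that $g$ and $h$ are equivalent.

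For the ``only if'' direction, suppose $g$ and $h$ are equivalent, so $h(\argcdot)=g(\argcdot+\delta)$ for some $\delta\in\real$. Then $\tilde h$ is adjusted and equivalent to $h$, hence (again by transitivity) also equivalent to $g$. By the uniqueness of the adjusted representation of $g$, this forces $\tilde h=\tilde g$.

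There is no real obstacle here; the only thing to be careful about is that the construction $\tilde g(\argcdot)=g(\argcdot-a_{k-1}/a_k)$ is well defined because the polynomial has degree $k\ge 1$ with leading coefficient $a_k\ne 0$, so the shift is meaningful and $\tilde g$ has the same degree and leading coefficient as $g$; the coefficient of $t^{k-1}$ in $\tilde g$ can be computed directly and seen to vanish, which is why $\tilde g$ is adjusted and why it is the unique adjusted polynomial equivalent to $g$ (any other shift would change the coefficient of $t^{k-1}$ by $a_k$ times the shift). With that in hand, the lemma follows immediately from transitivity and uniqueness, as above.
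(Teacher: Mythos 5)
Your proof is correct and takes essentially the same route as the paper, which states the lemma without a separate proof and relies exactly on the preceding observation that the adjusted representation is the unique adjusted polynomial equivalent to $g$; you simply spell out the transitivity argument and the coefficient computation that the paper leaves implicit.
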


\begin{remark}
 If all the interaction potentials are \emph{even}, then all the $V_{cv}''$ are
 automatically adjusted, and some parts of the following discussion can be simplified.
\end{remark}

We shift the argument of each
$V_{cv}''$ by a constant $\delta _{cv}$ so that they are all adjusted.
We let $\tilde f_v$ be the adjusted representation of $V''_{cv}$ and
use the notation
\begin{equ}
x_v \equiv q_{c}-q_v + \delta_{cv}
\end{equ}
so that
\begin{equ}\Label{e:ftilde}
 \tilde{f}_v(x_v)=V''_{cv}(q_c-q_v) \quad \text{ for all }\quad  q_c,
q_v\in \real~.
\end{equ}
With this notation, \eref{e:partial} reads as
\begin{equ}\label{e:newnotation}
  \sum_{v\in\TT}  \tilde f_v(x_v) \partial _{p_v}\in\MM~.
\end{equ}
We will now mostly deal with ``diagonal" vector fields, \ie, vector fields
of the kind
\eref{e:newnotation}, where the component along $\partial_{p_v}$
 depends only on
$x_v$. When taking commutators, it is crucial to remember that
$x_v$ is only a notation for $ q_{c}-q_v + \delta_{cv}$.

\begin{remark}\label{r:ftildediff} By the definition of equivalence and
\lref{l:45},
 two particles $v,w\in \TT$ are equivalent iff $\tilde{f}_v$ and $\tilde{f}_w$
coincide.
\end{remark}

\begin{lemma}\label{l:derrivecomponentwise}
Consider some functions $g_v$, $v\in\TT$.
\begin{equ}\label{e:implicationderivee}
\text{ If }~~\sumv  g_v(x_v)\partial _{p_v}\in\MM \qquad\text{ then }\qquad
\sumv  g_v'(x_v)\partial _{p_v}\in\MM~.
\end{equ}
\end{lemma}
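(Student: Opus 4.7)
The plan is to simply take the commutator with $\partial_{q_c}$. Because $c$ is controllable (our standing assumption in this section), \dref{d:controlnet} gives $\partial_{q_c} \in \MM$. Since $\MM$ is a Lie algebra, the bracket of $\partial_{q_c}$ with any element of $\MM$ remains in $\MM$.

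The key point is the chain rule applied to the notation $x_v \equiv q_c - q_v + \delta_{cv}$. We have $\partial_{q_c} x_v = 1$ for every $v\in \TT$, so
\begin{equ}
[\partial_{q_c},\, g_v(x_v) \partial_{p_v}] = g_v'(x_v)\, \partial_{p_v}~,
\end{equ}
the second-order terms in $q_c$ and $p_v$ cancelling as usual. Summing over $v\in\TT$ gives
\begin{equ}
\bigl[\partial_{q_c},\ \sumv g_v(x_v)\,\partial_{p_v}\bigr] = \sumv g_v'(x_v)\,\partial_{p_v}~,
\end{equ}
and the left-hand side lies in $\MM$ by the hypothesis and by $\partial_{q_c}\in\MM$, yielding \eref{e:implicationderivee}.

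There is essentially no obstacle: the computation is almost trivial, and works precisely because the vector field on the left-hand side of \eref{e:implicationderivee} is ``diagonal'' in the $\partial_{p_v}$ basis and its coefficients depend on the $q$'s only through $q_c$ (up to the inert shifts $\delta_{cv}$ and the $q_v$ variable, which $\partial_{q_c}$ does not see). This is exactly the reason the authors introduced the $x_v$ notation and adjusted the potentials beforehand, so that $\partial_{q_c}$ would act as a clean differentiation operator on the coefficient functions $g_v$.
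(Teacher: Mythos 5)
Your proof is correct and takes exactly the route the paper does: commute with $\partial_{q_c}\in\MM$, using $\partial_{q_c}x_v=1$ so the bracket differentiates each coefficient $g_v$. The paper states this as a one-line remark; you have simply spelled out the computation.
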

\begin{proof}
  This is immediate by commuting with $\partial _{q_c}$ (which is in
  $\MM$ by \lref{l:piimpliesqi}).
\qed\end{proof}

We now introduce the main tool.

\begin{definition}
Given two vector fields $Y$ and $Z$, we define
 the {\em double commutator} \scomm{Y}{Z} by
\begin{equ}~
\scomm{Y}{Z}\equiv [[X_0,Y], Z]~.
\end{equ}
\end{definition}
Obviously, if the vector fields $Y$ and $Z$ are in $ \MM$,
 then so is \scomm{Y}{Z}.

\begin{lemma}\label{l:doublecomm}
Consider some functions $g_v$, $h_v$,
$v\in\TT$.
Then
  \begin{equ}\label{e:scommvecteursdiagxu}
\scomm{\sumv g_v(x_v)\partial_{p_v}}{\sumvp
 h_{v'}(x_{v'})\partial_{p_{v'}}}= \sumv (g_v h_v)'(x_v)\,\partial _{p_v}~.
\end{equ}
\end{lemma}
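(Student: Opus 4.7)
The plan is to compute the double commutator $[[X_0,Y],Z]$ directly from the explicit form of $X_0$ in \eref{e:X0}, with $Y=\sumv g_v(x_v)\partial_{p_v}$ and $Z=\sumvp h_{v'}(x_{v'})\partial_{p_{v'}}$. I split the work in two stages.

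First I compute $W:=[X_0,Y]$. Only two mechanisms contribute. The kinetic piece $\sum_{w\in\VV} p_w\partial_{q_w}$ of $X_0$ differentiates the coefficient $g_v(x_v)$ via the $q_c,q_v$ dependence of $x_v=q_c-q_v+\delta_{cv}$, producing a factor $g_v'(x_v)(p_c-p_v)$ multiplying $\partial_{p_v}$. Commuting $X_0$ past $\partial_{p_v}$ produces, besides a harmless $\gamma_v\partial_{p_v}$ term (for $v\in\VV_*$), a pure $\partial_{q_v}$ contribution coming from $[p_v\partial_{q_v},\partial_{p_v}]=-\partial_{q_v}$. All remaining pieces of $X_0$ have coefficients depending only on positions, hence commute with $\partial_{p_v}$. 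Thus $W=W_p+W_q$ with
\begin{equ}
W_p=\sumv\bigl(g_v'(x_v)(p_c-p_v)+\gamma_v\1_{v\in\VV_*}g_v(x_v)\bigr)\partial_{p_v},\qquad W_q=-\sumv g_v(x_v)\partial_{q_v}.
\end{equ}

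Second I compute $[W,Z]=[W_p,Z]+[W_q,Z]$. For $[W_p,Z]$, both vector fields are along $\partial_p$; since the coefficients of $Z$ depend only on $q$'s, the only surviving terms come from $Z$ acting on the $p$-dependent coefficients of $W_p$. Explicitly, $\partial_{p_{v'}}\bigl(g_v'(x_v)(p_c-p_v)\bigr)=g_v'(x_v)(\delta_{v',c}-\delta_{v',v})$; because $c\notin\TT^c$ (no self-loops), $v'$ never equals $c$, and the $\gamma_v g_v$ piece has no $p$-dependence, so $[W_p,Z]=\sumv g_v'(x_v)h_v(x_v)\partial_{p_v}$. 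For $[W_q,Z]$, the $\partial_{q_v}$ in $W_q$ hits the coefficient $h_{v'}(x_{v'})$ of $Z$; since $\partial_{q_v}(x_{v'})=-\delta_{v,v'}$ for $v,v'\in\TT$, one gets $[W_q,Z]=\sumv g_v(x_v)h_v'(x_v)\partial_{p_v}$. Summing the two contributions and applying the Leibniz rule yields $\sumv(g_vh_v)'(x_v)\partial_{p_v}$, as claimed.

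The computation is essentially bookkeeping, but one has to be careful on two points: (i) remembering that $x_v$ encodes dependence on \emph{both} $q_c$ and $q_v$, which is exactly why the kinetic piece of $X_0$ produces the $p_c-p_v$ factor, and (ii) exploiting that $c\notin\TT^c$ to discard the $\delta_{v',c}$ terms. The second stage is where the product rule miraculously emerges, with $g'h$ coming from $[W_p,Z]$ and $gh'$ coming from $[W_q,Z]$; this is the main conceptual point and will be the anchor of the write-up.
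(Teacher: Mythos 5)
Your proof is correct and follows exactly the same route as the paper: first compute $[X_0,Y]$ explicitly (your $W_p+W_q$ is precisely the paper's displayed intermediate expression), then commute the result with $Z$. The paper leaves the second commutation to the reader, whereas you carry it out carefully and correctly identify the Leibniz-rule mechanism and the role of $c\notin\TT^c$ in discarding the $\delta_{v',c}$ terms.
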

\begin{proof}
Observe first that (omitting the arguments $x_v$)
  \begin{equa}\null
 \bigl  [\,{X}_0, \sumv& g_v \partial_{p_v} \,\bigr]
=\sumv  (p_c-p_v)
    g_v'\partial_{p_v}-\sumv
    g_v\partial _{q_v}+\sum_{v\in\TT\cap \VV_*}\kern -0.5 em \gamma_v
g_v\partial _{p_v}~.
  \end{equa}
Commuting with $\sumvp
 h_{v'}(x_{v'})\partial_{p_{v'}}
$ gives the desired result.
\qed\end{proof}

We will prove \tref{t:mainprop} starting from
\eref{e:newnotation} and using only  \eref{e:implicationderivee} and
 double commutators of the kind \eref{e:scommvecteursdiagxu}.

Let $d_v$ be the degree of  $\tilde f_v$. Note that since the
interaction potentials are of degree at least 3, we have $d_v\ge1$. We define
\begin{equ}
d \equiv \max_{v \in \TT } d_v \ge 1
\end{equ}
as the maximal degree of the adjusted polynomials $\tilde f_v$ with $v\in \TT$.
We can then write
\begin{equ}\Label{e:fuscoeffsai}
\tilde{f}_v(x) = \sum_{j=0}^{d} {b_{vj}}\,x^j/j!~,
\end{equ}
 for some real coefficients  $b_{vj}$, $j=0, \dots, d$, with
\begin{equ}
b_{vj} = 0 \quad \text { if } \quad j > d_v \quad \text { and }
\quad
b_{v,d_v-1} = 0~,
\end{equ}
for all $v\in \TT$.

\begin{definition}
We define the set of particles $v \in \TT$ corresponding to the
maximal
degree $d$:
\begin{equ}
\VVoldwd  \equiv \{v \in \TT  ~|~ d_v = d\}~.
\end{equ}
For every $\ell$, $0\leq \ell \leq d$, we define the set
\begin{equ}
\BB_\ell^{\,d} \equiv \{b_{v\ell} ~|~v\in \VVoldwd \}
\end{equ}
of {\em distinct} values taken by the coefficients
of ${x_v^\ell/\ell!}$ in $\tilde{f}_v$~, $v\in \VVoldwd$.
\end{definition}
We begin with a technical lemma. Observe how it is expressed
in terms of the $x_v$. In a sense, this shows that the $x_v$ are really the
 ``natural" variables for this problem. Thus, in addition to making the notion
of equivalence
trivial (\rref{r:ftildediff}), working with adjusted representations will be
very convenient
from a technical point of view.

\begin{lemma}\Label{l:47}
The following hold:
\begin{itemize}
\item[(i)] For each  $b\in \BB^{\,d}_d $\,, we have
  \begin{equ}\label{e:spansuraxeta}
    \sum_{v\in \VVoldwd~:~ b_{vd}=b}x_v\partial_{p_v}
 \in \MM \quad \text{ and }
\sum_{v\in \VVoldwd~:~b_{vd}=b}\partial_{p_v}  \in \MM~.
  \end{equ}
\item[(ii)] Furthermore,
  \begin{equ}\label{e:fullvector1dansMh}
    \sumvd x_v\partial_{p_v}  \in \MM \quad \text{ and }
\quad \sumvd\partial_{p_v}  \in \MM~.
  \end{equ}
\item[(iii)] Let $\alpha_v, \beta_v$, $v\in \VVoldwd$ be real constants.
If $d\geq 2$, we have the two implications
\begin{equs}
\text{ if } \sumvd \alpha_v\ \partial_{p_v}\in \MM~,
\quad &\text{ then }~~ \sumvd \alpha_v~x_v~\partial_{p_v}\ \in \MM~,\qquad
\label{e:cisimpliescizis}\\
\text{ if } \sumvd \alpha_v\ \partial_{p_v},~\sumvd
\beta_v\ \partial_{p_v}\in \MM~,
\quad &\text{ then }~~\sumvd \alpha_v~\beta_v\
\partial_{p_v}\ \in \MM~.\qquad\label{e:cisimpliesproducts}
\end{equs}
\end{itemize}
\end{lemma}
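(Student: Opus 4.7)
The starting point I would use is \lref{l:startingvectorsinM} combined with \lref{l:derrivecomponentwise}, which together give $Y_k \equiv \sumv \tilde f_v^{(k)}(x_v)\,\partial_{p_v} \in \MM$ for every $k\geq 0$. The key book-keeping is then: since $\tilde f_v$ has exact degree $d_v\le d$ with equality only for $v\in\VVoldwd$, the highest-derivative contributions from $\TT\setminus\VVoldwd$ drop out, giving $Y_d = \sumvd b_{vd}\,\partial_{p_v}$. Moreover, for $v\in\VVoldwd$ adjustedness forces $b_{v,d-1}=0$, whence $\tilde f_v^{(d-1)}(x_v)=b_{vd}\,x_v$ and $\tilde f_v^{(d-2)}(x_v)=b_{v,d-2}+b_{vd}\,x_v^2/2$. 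These explicit shapes will make every double commutator from \lref{l:doublecomm} transparent.

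Writing $C_1 \equiv Y_d$ and $C_{n+1} \equiv \scomm{C_n}{Y_{d-1}}$, \lref{l:doublecomm} yields $C_n = \sumvd b_{vd}^n\,\partial_{p_v} \in \MM$ for all $n\geq 1$: for $v\in\VVoldwd$ the factor $b_{vd}^n\cdot b_{vd} x_v$ differentiates to $b_{vd}^{n+1}$, and outside $\VVoldwd$ the $g_v$-factor vanishes. With $\BB^{\,d}_d=\{\beta_1,\dots,\beta_m\}$ and $S_k \equiv \sum_{v\in\VVoldwd,\,b_{vd}=\beta_k}\partial_{p_v}$, one has $C_n=\sum_k \beta_k^n\,S_k$ for $n=1,\dots,m$; this is a linear system whose matrix equals a standard Vandermonde times $\mathrm{diag}(\beta_k)$, both invertible since the $\beta_k$ are distinct and non-zero. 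Inverting puts each $S_k$ in $\MM$, which is the second identity of \eref{e:spansuraxeta}, and summation over $k$ yields the second identity of \eref{e:fullvector1dansMh}.

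For the $x_v$-weighted statements, when $d\geq 2$ I would form $\scomm{C_n}{Y_{d-2}}$: the derivative of $b_{vd}^n\cdot(b_{v,d-2}+b_{vd}x_v^2/2)$ is $b_{vd}^{n+1}\,x_v$, and contributions outside $\VVoldwd$ again vanish, giving $\sumvd b_{vd}^{n+1}\,x_v\,\partial_{p_v}\in\MM$ for every $n\geq 1$. A second application of the same Vandermonde extraction then yields $\sum_{v\in\VVoldwd,\,b_{vd}=\beta_k} x_v\,\partial_{p_v} \in \MM$. When $d=1$, instead, $\VVoldwd=\TT$ and $\tilde f_v(x_v)=b_{v1}\,x_v$, and iterating $\scomm{\,\cdot\,}{Y_0}$ starting from $Y_0$ itself produces $\sumvd b_{v1}^n\,x_v\,\partial_{p_v}\in\MM$ for all $n\geq 1$, after which Vandermonde concludes as before. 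Summing over classes completes the first identities in \eref{e:spansuraxeta} and \eref{e:fullvector1dansMh}, finishing (i) and (ii).

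For (iii), I would apply the same machinery to arbitrary coefficients. Assuming $\sumvd \alpha_v\partial_{p_v}\in\MM$, the iteration $\scomm{\,\cdot\,}{Y_{d-1}}$ produces $\sumvd \alpha_v b_{vd}^n\,\partial_{p_v}\in\MM$ for every $n\geq 0$; Vandermonde inversion separates this into classes $\sum_{v\in\VVoldwd,\,b_{vd}=\beta_k}\alpha_v\partial_{p_v}\in\MM$; and one final commutator with $Y_{d-2}$ multiplies each class by $\beta_k\,x_v$, which I divide out by the non-zero scalar $\beta_k$ using closure under scalar multiplication before summing over $k$ to get \eref{e:cisimpliescizis}. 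For \eref{e:cisimpliesproducts}, a single application of \lref{l:doublecomm} to $\sumvd \beta_v\partial_{p_v}$ and the vector field $\sumvd \alpha_v x_v\partial_{p_v}$ just obtained differentiates $\beta_v\alpha_v x_v$ to $\alpha_v\beta_v$, yielding the product directly. The only real obstacle is book-keeping: at each Vandermonde step one must carefully record which vector field one is entitled to place in $\MM$, but the underlying calculations are elementary combinations of the two tools already developed (componentwise differentiation and the $(gh)'$ double commutator).
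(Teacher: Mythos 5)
Your proposal is correct and uses the same toolbox as the paper---iterated double commutators to generate powers of $b_{vd}$, followed by a Vandermonde extraction---but it reorganizes the steps in a way that makes the argument slightly longer. The paper starts directly from $Y=\sum_v(\partial^{d-1}\tilde f_v)(x_v)\partial_{p_v}$, iterates $\scomm{\argcdot}{Y/2}$ to obtain the $x_v$-weighted fields $\sumvd b_{vd}^{\,i}x_v\partial_{p_v}$ (for $i\geq 2$), applies \lref{l:vandermonde} with $s=1$, and then recovers the constant-coefficient fields by a single application of \lref{l:derrivecomponentwise}; this handles $d=1$ and $d\geq 2$ uniformly in parts (i)--(ii). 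Your route goes in the opposite order: you first get $\sumvd b_{vd}^{\,n}\partial_{p_v}$ by iterating $\scomm{\argcdot}{Y_{d-1}}$ from $Y_d$, extract the constant class sums $S_k$, and only then produce the $x_v$-weighted versions by commutating against $Y_{d-2}$. Since $Y_{d-2}$ exists only when $d\geq 2$, you are forced to run a separate argument for $d=1$, which the paper avoids. Your treatment of (iii) is also a valid variant: where the paper rescales $x_v$ by $1/b_{vd}$ (using (i)) and applies $\scomm{\argcdot}{Y_{d-2}}$ once globally, you first split $\sumvd\alpha_v\partial_{p_v}$ into the $b_{vd}$-classes by Vandermonde and then commute each class with $Y_{d-2}$, dividing out the nonzero scalar $\beta_k$. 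Both are correct; the paper's arrangement is marginally more economical, while yours makes the role of the class decomposition more explicit at each stage.
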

\begin{remark}
Observe that the assumption $d\ge1$ is crucial in the proof of (i).
Requiring the $\tilde{f}_v$ to be non-constant ensures that we can find
non-trivial double commutators, which is the crux of our analysis. See
\exref{ex:anharmon} for what goes wrong for harmonic potentials.
\end{remark}

\begin{proof}
(i).  By \eref{e:newnotation}
 and using \eref{e:implicationderivee} recursively $d-1$ times,
we find that
\begin{equ}
Y\equiv \sumv\bigl (\partial ^{\,d-1}
\tilde{f}_v \bigr)(x_v)\, \partial_{p_v} =\sumv \left(b_{v,d-1}
+  b_{vd}  x_v\right) \partial_{p_v}
\end{equ}
is in $\MM$.
But now, by \eref{e:scommvecteursdiagxu},
\begin{equ}
\scomm{Y}{Y/2} =   \sumv b_{vd}(b_{v,d-1}
+ b_{vd} x_v) \partial_{p_v} \in \MM~.
\end{equ}
Taking more double commutators with $Y/2$, we obtain for all $r\geq 1$:
\begin{equ}
 \sumv b_{vd}^{\,r}\,(b_{v,d-1}+ b_{vd} x_v) \partial_{p_v} \in \MM~.
\end{equ}
But the sum above is really only over  $\VVoldwd$ since
$b_{vd}\neq0$ only if $v\in\VVoldwd$. Moreover,
for these $v$, we have $b_{v,d-1} = 0$ since the polynomials are adjusted, so
that for all $i\geq 2$,
\begin{equ}\label{e:sumvdtildeau1plusi}
 \sumvd b_{vd}^{i} x_v \partial_{p_v} \in \MM~.
\end{equ}

Let $b\in \BB_d^{\,d}$. Using \lref{l:vandermonde} with $s=1$
 and with the set of distinct and non-zero values
 $\{b_{vd}~|~v\in \VVoldwd\}=\BB_d^{\,d}$ we find
real numbers $r_1, r_2,\dots,r_n$ (with  $n=\left|\BB_d^{\,d}\right|$)
such that $\sum_{i=1}^{n} r_i~b_{vd}^{i+1}$
 equals 1 if $b_{vd} = b$ and 0 when $b_{vd} \ne b$.
Thus,
\begin{equ}
\sum_{i=1}^{n} r_i \sumvd b_{vd}^{i+1} x_v \partial_{p_v} =
     \sum_{v\in \VVoldwd~:~ b_{vd}=b}\kern -1 em x_v\partial_{p_v}
\end{equ}
is in $\MM$ by \eref{e:sumvdtildeau1plusi}.
This together with \eref{e:implicationderivee} establishes
the second inclusion of \eref{e:spansuraxeta}, so that we have shown (i).

The statement (ii) follows by summing (i) over all $b\in\BB_d^{\,d}$.

Proof of (iii). Let us assume that $d\geq 2$ and that $\sumvd
\,\alpha_v\ \partial_{p_v}\in \MM$.
By \eref{e:spansuraxeta}, we have for each $b\in\BB_d^{\,d}$ that
\begin{equ}
{1\over b} \sum_{v\in \VVoldwd~:~b_{vd}=b}x_v\partial_{p_v} =
 \sum_{v\in \VVoldwd~:~b_{vd}=b} {x_v\over b_{vd}}\partial_{p_v}
\in \MM~.
\end{equ}
Taking the double
 commutator with $\sumvd\alpha_v\ \partial_{p_v}$ and summing
 over all $b\in\BB_d^{\,d}$\, shows that
\begin{equ}
U\equiv \sumvd {\alpha_v\over b_{vd} }\partial_{p_v} \in \MM~.
\end{equ}
Since we assume here $d\geq 2$, we have
$Z\equiv \sumv \bigl(\partial^{\,d-2}\tilde{f}_v\bigr)(x_v)\partial_{p_v}\in
\MM$.
But then,
\begin{equ}
\scomm{U}{Z} =
  \sumv {\alpha_v\over b_{vd}
}\bigl(\partial^{\,d-1}\tilde{f}_v\bigr)(x_v)\partial_{p_v}=
 \sumvd {\alpha_v\over
b_{vd} }\left({b_{v,d-1}}+b_{vd}\ x_v \right)\partial_{p_v}
\end{equ}
is also in $\MM$. Recalling that $b_{v, d-1} = 0$ for all $v\in \VVoldwd $,
we obtain \eref{e:cisimpliescizis}. Finally \eref{e:cisimpliesproducts} follows
from \eref{e:cisimpliescizis} and the double commutator
\begin{equ}
\scomm{\sumvd \alpha_v~x_v~\partial_{p_v}}{
 \sumvd \beta_v~\partial_{p_v}} = \sumvd \alpha_v~\beta_v~\partial_{p_v}~.
\end{equ}
This completes the proof.
\qed\end{proof}

With these preparations, we can now prove \tref{t:mainprop}.

\begin{proof}[of \tref{t:mainprop}]

We distinguish the cases $d=1$ and $d\ge2$.\hfill\break
{\bf Case $d=1$}: This case is easy. Since all the $\tilde{f}_v$ have degree
1,
we have that $\tilde{f}_v(x_v) = b_{v1}\ x_v$ for all $v\in \TT$, with
$b_{v1}\neq 0$.
Consequently, the sets $\TT_i$ consist of those $v$ which
have the same $b_{v1}$ (see \rref{r:ftildediff}).
Thus, we have by \eref{e:spansuraxeta} that for each $\TT_i$:
 \begin{equ}\label{e:d1}
\sum_{v\in\TT_i}\partial_{p_v}\in\MM~~
\text{  and }~~\sum_{v\in\TT_i}x_v\partial_{p_v}\in \MM \qquad \text{(if
$d=1$)}~.
 \end{equ}
This  shows that the conclusion of
\tref{t:mainprop} holds when $d=1$.\hfill\break
{\bf Case $d\ge2$}: In this case, \eref{e:spansuraxeta} is not enough.
 First, \eref{e:spansuraxeta} says nothing
 about the masses $v \in \TT\setminus \VVoldwd$, for which
$b_{vd}=0$. And second,
 \eref{e:spansuraxeta} provides us with no way to ``split"
the $\partial_{p_v}$ corresponding to a common (non-zero)
value $b$ of $b_{vd}$, even though the corresponding
$v$ might be inequivalent
due to some $b_{vk}$ with $k< d$. To fully make use of these coefficients,
we must develop some more advanced machinery.

\begin{definition}
We denote by $\PP_d$ the vector space of real polynomials
 in one variable of degree at most $d$. We consider
the operator $\G: \PP_d \to \PP_d$ defined by
\begin{equ}
 (\G v)(x)
\equiv (x\cdot v(x))'~,
\end{equ}
and we introduce the set of operators
\begin{equ}
\FF \equiv \mathrm{span}\{\G, \G^2, \dots, \G^{d+1}\}~.
\end{equ}
\end{definition}

Observe that by \eref{e:newnotation} and \eref{e:fullvector1dansMh} we have
\begin{equ}
\scomm{  \sum_{v\in\TT}  \tilde f_v(x_v) \partial _{p_v}}{ \sumvd
x_v\partial_{p_v} }
= \sumvd (\G \tilde f_v)(x_v)\partial_{p_v}  \in \MM~.
\end{equ}
Note that we obtain a sum over $\VVoldwd$ only. By taking more double
commutators
with $ \sumvd x_v\partial_{p_v}$, we find that $\sumvd (\G^k \tilde
f_v)(x_v)\partial_{p_v} $
is in $\MM$ for all $k\geq 1$. Thus, by the linear structure of $\MM$,
we obtain
\begin{lemma}\label{l:pok}
For all $\P\in\FF$, we have
   \begin{equ}
    \sumvd (\P\tilde{f}_v)(x_v)\partial_{p_v}  \in \MM~.
  \end{equ}
\end{lemma}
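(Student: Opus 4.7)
The plan is to iteratively produce the vector fields $\sumvd (\G^k \tilde f_v)(x_v)\partial_{p_v}\in\MM$ for each $k\ge 1$, via repeated double commutators with $\sumvd x_v\partial_{p_v}$, and then to invoke the linear structure of $\MM$. The two starting ingredients are already at hand: \eref{e:newnotation} gives $\sum_{v\in\TT}\tilde f_v(x_v)\partial_{p_v}\in\MM$, while the second inclusion of \eref{e:fullvector1dansMh} gives $\sumvd x_v\partial_{p_v}\in\MM$.

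First I would compute the double commutator of these two elements using the formula \eref{e:scommvecteursdiagxu}. Although the summation ranges differ ($\TT$ versus $\VVoldwd$), the formula applies once the second sum is extended to all of $\TT$ with zero coefficients on $\TT\setminus\VVoldwd$: the product $g_v h_v$ vanishes outside $\VVoldwd$, so that
\[
\scomm{\sum_{v\in\TT}\tilde f_v(x_v)\partial_{p_v}}{\sumvd x_v\partial_{p_v}}
=\sumvd\bigl(x_v\,\tilde f_v(x_v)\bigr)'\partial_{p_v}
=\sumvd (\G\tilde f_v)(x_v)\partial_{p_v}\in\MM~,
\]
which settles the case $\P=\G$.

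To iterate, I would observe that $\G$ maps $\PP_d$ into itself (differentiating $x\cdot v(x)$ does not increase the degree), so each $\G^k\tilde f_v$ is again a polynomial in $x_v$ to which \eref{e:scommvecteursdiagxu} can be applied. Bracketing $\sumvd(\G^k\tilde f_v)(x_v)\partial_{p_v}$ with $\sumvd x_v\partial_{p_v}$ — now both sums over the same index set $\VVoldwd$ — yields $\sumvd(\G^{k+1}\tilde f_v)(x_v)\partial_{p_v}$; a straightforward induction on $k$ then shows that this lies in $\MM$ for every integer $k\ge 1$. Since $\MM$ is closed under real linear combinations, arbitrary elements $\P=\sum_{k=1}^{d+1}c_k\G^k$ of $\FF$ yield $\sumvd(\P\tilde f_v)(x_v)\partial_{p_v}\in\MM$, as required.

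There is essentially no genuine obstacle in this step; the one detail to keep straight is the mismatch in summation ranges when invoking \eref{e:scommvecteursdiagxu}, which is handled by the zero-extension trick above. The cutoff at $\G^{d+1}$ rather than arbitrarily high powers is a natural terminus rather than a restriction on the argument: since $\dim\PP_d=d+1$, Cayley--Hamilton implies that any power $\G^k$ with $k\ge d+2$ acting on $\PP_d$ is already a linear combination of $\G,\dots,\G^{d+1}$, so iterating further produces nothing new.
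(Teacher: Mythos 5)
Your proof is correct and follows essentially the same route as the paper: start from \eref{e:newnotation} and the second inclusion of \eref{e:fullvector1dansMh}, take repeated double commutators with $\sumvd x_v\partial_{p_v}$ to produce $\sumvd(\G^k\tilde f_v)(x_v)\partial_{p_v}\in\MM$ for all $k\ge 1$, and conclude by linearity of $\MM$. The two small points you add (zero-extension to reconcile the summation ranges in \eref{e:scommvecteursdiagxu}, and the Cayley--Hamilton remark about why stopping at $\G^{d+1}$ loses nothing) are correct but merely flesh out details the paper leaves implicit.
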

It is crucial to understand that it is the {\em same} operator $\P$ that  is
applied simultaneously to all the components,
 and that the components in
$\TT  \setminus \VVoldwd $ are ``projected out."

We now show that some very useful operators are in $\FF$.
\begin{proposition}\label{p:algebre}
The following hold:
\begin{itemize}
\item[(i)] The projector
\begin{equ}
\S_\ell : \PP_d \to \PP_d\,,\quad \sum_{i=0}^d b_i {x^i}/i! \mapsto  b_\ell
{x^\ell}/\ell!
\end{equ}
belongs to $\FF$ for all $\ell=0, \dots, d$.
\item[(ii)] The identity operator $\1$ acting on $\PP_d$ is in $\FF$.
\end{itemize}
\end{proposition}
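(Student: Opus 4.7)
The key observation is that $\G$ is diagonal in the monomial basis. Indeed, for any $i\geq 0$ one has
\begin{equ}
(\G x^i)(x) = (x\cdot x^i)' = (i+1)\,x^i~,
\end{equ}
so the monomials $x^0, x^1, \dots, x^d$ are eigenvectors of $\G$ with pairwise distinct non-zero eigenvalues $1, 2, \dots, d+1$. Consequently, for any $k\geq 1$, the operator $\G^k$ is diagonal in the same basis with eigenvalues $(i+1)^k$, $i=0, \dots, d$. Any operator in $\FF$ is therefore diagonal in the monomial basis as well.

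To establish (i), I look for real coefficients $c_1, \dots, c_{d+1}$ such that $\S_\ell = \sum_{k=1}^{d+1} c_k\, \G^k$. Reading this off on each monomial, the requirement is
\begin{equ}
\sum_{k=1}^{d+1} c_k\,(i+1)^k = \delta_{i\ell}~, \qquad i=0,1,\dots,d~.
\end{equ}
This is a linear system in $c_1, \dots, c_{d+1}$ whose matrix, after factoring out one power of $(i+1)$ per row, is a Vandermonde matrix in the distinct non-zero values $1, 2, \dots, d+1$. Thus the system has a unique solution (alternatively, one may invoke the Vandermonde-type lemma already used in the proof of \lref{l:47}), giving $\S_\ell \in \FF$ for every $\ell=0, \dots, d$.

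For (ii), I simply observe that the identity on $\PP_d$ is the sum of the projectors onto its one-dimensional eigenspaces:
\begin{equ}
\1 = \sum_{\ell=0}^{d} \S_\ell~,
\end{equ}
which lies in $\FF$ by (i) and the linear (span) structure of $\FF$.

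The only genuine step is the diagonalization of $\G$; once that is in hand, the rest is a standard Vandermonde inversion, so I do not anticipate a real obstacle. The point worth emphasizing is the non-vanishing of the eigenvalues $i+1$, which is what allows us to use powers $\G^k$ with $k\geq 1$ (note that $\G^0 = \1$ is explicitly not in the defining span of $\FF$, so (ii) really does need to be derived rather than read off).
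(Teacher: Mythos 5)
Your proof is correct and follows essentially the same route as the paper: diagonalize $\G$ on a monomial-type basis (the paper uses $x^j/j!$, you use $x^j$ — both are eigenvectors with eigenvalue $j+1$, and the projector $\S_\ell$ is the same in either normalization), then invert a Vandermonde system in the distinct non-zero eigenvalues $1,\dots,d+1$ to express $\S_\ell$ as a combination of $\G,\dots,\G^{d+1}$, and finally sum the projectors to recover $\1$. The only cosmetic difference is that you spell out the Vandermonde factorization in place, while the paper invokes its Lemma~A.1 with $s=0$.
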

\begin{proof} Consider the basis $B= (e_0, e_1, \dots, e_d)$ of $\PP_d$ where
$e_j(x) = {x^j}/j!$~.
Observe that for all $j \geq 0$ we have $\G  e_j = (j+1)e_j$, so that $\G$ is
diagonal in the basis $B$. Thus, $\G^k$  is represented by the matrix
 $\mathrm{diag}(1^k, 2^k, \dots, (d+1)^k)$ for all $k\geq 1$. Consequently, for
each
$\ell \in \{0, \dots, d\}$, there is by \lref{l:vandermonde} with $s=0$
a linear combination of $\G, \G^2, \dots, \G^{d+1}$ that is equal to $\S_\ell$.
This proves (i). Moreover, we have that $\sum_{\ell = 0}^d \S_\ell = \1$, so
that
the proof of (ii) is complete.
\qed\end{proof}

\begin{lemma}\label{l:lesajls} For all $\ell = 0, \dots, d$, and for
  each $b\in \BB_\ell^{\,d} = \{b_{v\ell} ~|~v\in \VVoldwd \}$ we have
  \begin{equ}\label{e:spansuratous}
    \sum_{v\in \VVoldwd ~:~b_{v\ell}=b}\partial_{p_v}  \in \MM~.
  \end{equ}
\end{lemma}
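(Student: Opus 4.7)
The plan is to combine \lref{l:pok} with the projector operators from \pref{p:algebre} to isolate single coefficients of the expansion \eref{e:fuscoeffsai}, and then separate by value using a Vandermonde/Lagrange interpolation argument analogous to what was done in \lref{l:47}(i).

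First I would fix $\ell\in\{0,\dots,d\}$. Since $\S_\ell\in\FF$ by \pref{p:algebre}(i), \lref{l:pok} applied to $\P=\S_\ell$ yields
\begin{equ}
\sumvd b_{v\ell}\,{x_v^\ell\over \ell!}\,\partial_{p_v}\in\MM~.
\end{equ}
To strip off the $x_v^\ell/\ell!$ factor I would apply \eref{e:implicationderivee} exactly $\ell$ times; each application differentiates the scalar component with respect to $x_v$, and after $\ell$ differentiations the coefficient in front of $\partial_{p_v}$ is simply the constant $b_{v\ell}$. Thus
\begin{equ}\label{e:plan1}
\sumvd b_{v\ell}\,\partial_{p_v}\in\MM~.
\end{equ}

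Next I would promote this to all powers of $b_{v\ell}$. Since we are in the case $d\ge 2$ of \tref{t:mainprop}, \eref{e:cisimpliesproducts} in \lref{l:47}(iii) applies: taking the vector field \eref{e:plan1} with itself, and iterating, gives
\begin{equ}
\sumvd b_{v\ell}^{\,k}\,\partial_{p_v}\in\MM\qquad\text{for all } k\ge 1~,
\end{equ}
and the case $k=0$ is just \eref{e:fullvector1dansMh} from \lref{l:47}(ii).

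Finally, given $b\in\BB_\ell^{\,d}$, I would invoke \lref{l:vandermonde} (with $s=0$, since the value $0$ may well appear in $\BB_\ell^{\,d}$) on the distinct values constituting $\BB_\ell^{\,d}$ to produce real coefficients $r_0,r_1,\dots,r_{n-1}$ (with $n=|\BB_\ell^{\,d}|$) such that $\sum_{k=0}^{n-1} r_k\, (b')^k$ equals $1$ if $b'=b$ and $0$ for every other $b'\in\BB_\ell^{\,d}$. Forming the corresponding linear combination of the vector fields above yields exactly $\sum_{v\in\VVoldwd~:~b_{v\ell}=b}\partial_{p_v}\in\MM$, which is \eref{e:spansuratous}.

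I don't anticipate a real obstacle: the work has essentially already been done in \lref{l:47} and \pref{p:algebre}. The only subtle point is making sure that the projection $\S_\ell$ and the subsequent $\ell$-fold use of \eref{e:implicationderivee} really do cleanly produce the purely scalar field \eref{e:plan1}; once that is in hand, the product rule \eref{e:cisimpliesproducts} and Lagrange interpolation finish the argument mechanically.
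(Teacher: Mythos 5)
Your first three steps reproduce the paper's proof exactly: apply $\S_\ell$ via \lref{l:pok} and \pref{p:algebre}(i), strip the $x_v^\ell/\ell!$ factor by $\ell$-fold application of \eref{e:implicationderivee}, and bootstrap to all powers $b_{v\ell}^k$ via \eref{e:cisimpliesproducts}. The gap is in the final interpolation step, where you misread \lref{l:vandermonde}. That lemma requires the values $c_1,\dots,c_n$ to be \emph{distinct and non-zero}, and with parameter $s$ it produces coefficients $r_1,\dots,r_n$ for the powers $c_j^{s+1},\dots,c_j^{s+n}$; with $s=0$ the powers run from $1$ to $n$, not from $0$ to $n-1$. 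So taking $s=0$ does \emph{not} accommodate a possible zero in $\BB_\ell^{\,d}$ — on the contrary, the non-zero hypothesis is still in force and the constant term $b_{v\ell}^0$ is never supplied by the lemma. The $r_0,\dots,r_{n-1}$ you want, forming the Lagrange basis in powers $0,\dots,n-1$ on possibly-zero nodes, are real and exist, but they are not what \lref{l:vandermonde} asserts.

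The paper avoids this by applying \lref{l:vandermonde} (with $s=0$) only to the set $\BB_\ell^{\,d}\setminus\{0\}$, which yields \eref{e:spansuratous} for every non-zero $b$. The remaining case $b=0$ is then handled by subtraction: sum the obtained vector fields over all non-zero $b$, and subtract from $\sumvd\partial_{p_v}$ (which is in $\MM$ by \eref{e:fullvector1dansMh}). Your proof would be correct if you either invoked a Lagrange-interpolation statement with powers $0,\dots,n-1$ that allows a zero node, or if you restricted the application of \lref{l:vandermonde} to the non-zero values and finished the $b=0$ case by subtraction as the paper does.
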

\begin{proof}

Let $\ell\in \{0, 1, \dots, d\}$.
Using \lref{l:pok} and \pref{p:algebre}(i) we find that $ \sumvd (b_{v\ell}
{x^\ell}/\ell!)\partial_{p_v}$
 is in $\MM$. Using \eref{e:implicationderivee} repeatedly,
we find that $\sumvd b_{v\ell} \partial_{p_v}$ is in $\MM$. Thus, by
 \eref{e:cisimpliesproducts},
\begin{equ}
\sumvd b_{v\ell}^i \partial_{p_v}\in \MM \qquad  \text{ for all $i\geq 1$}.
\end{equ}
Then, applying
\lref{l:vandermonde} to the set
 $\BB_\ell^{\,d}\setminus \{0\}$
and with $s = 0$,
we conclude that
  \begin{equ}\label{e:sumaksans0}
    \sum_{v\in \VVoldwd ~:~b_{v\ell}=b}\partial_{p_v}  \in \MM
\quad
\text{for all }\quad b\in \BB_\ell^{\,d}\setminus \{0\}~.
  \end{equ}
If $0\notin \BB_\ell^{\,d}$, we are done. Else, we obtain
that  \eref{e:spansuratous}
holds also for $b=0$ by summing the vector field \eref{e:sumaksans0}
over all $b\in\BB_\ell^{\,d}\setminus \{0\}$
and subtracting the result from $\sumvd \partial_{p_v}$ (which is in $\MM$ by
\eref{e:fullvector1dansMh}). This completes the proof.
\qed\end{proof}

Remember that by \rref{r:ftildediff}, a given equivalence class $\TT_i$
is either a subset of $\DD^{\,d}$ or completely disjoint from it.

\begin{lemma}\label{l:vd} Let $\TT_i$ be an equivalence class such that $\TT_i
\subset \DD^{\,d}$.
Then
\begin{equ} \label{e:xdpudpudansTi}
\sum_{v\in\TT_{i}}\partial_{p_v} \in \MM~, \quad \text{ and
}~~\sum_{v \in \TT_{i}}x_v\,\partial_{p_v}\in \MM~.
\end{equ}
\end{lemma}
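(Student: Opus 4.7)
The plan is to realize $\TT_i$ as the intersection, over $\ell=0,\dots,d$, of the level sets $\{v\in\DD^{\,d} : b_{v\ell} = b_\ell^{(i)}\}$ of the coefficient functionals on $\DD^{\,d}$, and then transport this set-theoretic intersection to the level of vector fields using the multiplicative closure provided by part (iii) of \lref{l:47}.

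First, I would use the hypothesis that $\TT_i\subset\DD^{\,d}$ is a whole equivalence class together with \rref{r:ftildediff}: all $v\in\TT_i$ share a common adjusted polynomial $\tilde f$ of degree $d$, so there exist real numbers $b_0^{(i)},\dots,b_d^{(i)}$ with $b_{v\ell}=b_\ell^{(i)}$ for every $v\in\TT_i$ and every $\ell\in\{0,\dots,d\}$. Hence
\[
\TT_i \;=\; \bigcap_{\ell=0}^{d} \bigl\{v\in\DD^{\,d}~:~b_{v\ell}=b_\ell^{(i)}\bigr\}~.
\]

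Next, I would apply \lref{l:lesajls} with $b=b_\ell^{(i)}\in\BB_\ell^{\,d}$ for each $\ell$, giving
\[
Y_\ell \;\equiv\; \sum_{v\in\DD^{\,d}~:~b_{v\ell}=b_\ell^{(i)}} \partial_{p_v} \;=\; \sumvd \alpha_v^{(\ell)} \partial_{p_v} \;\in\;\MM~,
\]
with $\alpha_v^{(\ell)}\in\{0,1\}$ the indicator of the corresponding level set. The key point is that pointwise products of $\{0,1\}$-valued coefficients behave like set intersections. Since we are in the case $d\ge 2$, the implication \eref{e:cisimpliesproducts} applies, so $Y_0,Y_1$ can be combined into a vector field whose coefficient at $v$ is $\alpha_v^{(0)}\alpha_v^{(1)}$. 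Iterating $d$ times with $Y_2,\dots,Y_d$, we obtain
\[
\sumvd \Bigl(\prod_{\ell=0}^{d}\alpha_v^{(\ell)}\Bigr)\partial_{p_v} \;=\; \sum_{v\in\TT_i}\partial_{p_v}\;\in\;\MM~,
\]
which is the first assertion of \eref{e:xdpudpudansTi}.

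The second assertion is then immediate from \eref{e:cisimpliescizis}, applied with $\alpha_v=\1_{v\in\TT_i}$: it yields $\sum_{v\in\TT_i} x_v\,\partial_{p_v}\in\MM$. There is no genuine obstacle here; the proof is essentially a bookkeeping exercise that exploits the machinery already assembled, the only subtle point being that the hypothesis $d\ge 2$ is exactly what makes the product closure \eref{e:cisimpliesproducts} and the multiplication-by-$x_v$ property \eref{e:cisimpliescizis} available in the first place.
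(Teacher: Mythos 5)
Your proof is correct and takes essentially the same route as the paper's: realize $\TT_i$ as the intersection $\bigcap_{\ell=0}^d\{v\in\DD^{\,d}:b_{v\ell}=b_\ell^{(i)}\}$, get each level set into $\MM$ via \lref{l:lesajls}, close under pairwise products using \eref{e:cisimpliesproducts} (which, with $\{0,1\}$-valued coefficients, is exactly set intersection), and finish with \eref{e:cisimpliescizis}. The only addition in your write-up is the explicit remark that $d\ge 2$ is what licenses the use of \lref{l:47}(iii); the paper leaves that implicit since \lref{l:vd} is only invoked in the $d\ge 2$ branch of the proof of \tref{t:mainprop}.
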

\begin{proof}
All the polynomials $\tilde{f}_v$, $v\in \TT_i$ are equal. Thus, there
 are coefficients $c_\ell \in \BB_\ell^{\,d}$,  $\ell=0, 1, \dots, d$ such that
\begin{equ}\label{e:bfc}
\TT_{i}=  \bigcap_{\ell=0}^d \{ v\in\DD^{\,d}~|~ b_{v\ell} = c_\ell\}~.
\end{equ}
By \lref{l:lesajls}, we have for all $\ell=0, \dots, d$ that
  \begin{equ}\label{e:chaqueltinterdedans}
  \sum_{v\in \VVoldwd ~:~b_{v\ell}=c_\ell}\partial_{p_v}  \in \MM~.
  \end{equ}
Now observe that whenever two sets $B, B' \subset \VVoldwd$
are such that $\sum_{v\in B}\partial_{p_v} \in \MM$
and $\sum_{v\in B'}\partial_{p_v}\in\MM$, we
have by \eref{e:cisimpliesproducts} that
 $\sum_{v\in B\cap B'}\partial_{p_v} \in \MM$. Applying this
 recursively to the intersection in \eref{e:bfc}
and using \eref{e:chaqueltinterdedans} shows that
 $\sum_{v\in\TT_{i}}\partial_{p_v} \in \MM$. Using now
\eref{e:cisimpliescizis} implies that $\sum_{v\in\TT_{i}}x_v \partial_{p_v} \in
\MM$,
which completes the proof.
\qed\end{proof}

With these tools, we are now ready to complete the
proof of \tref{t:mainprop} (for the case $d\ge2$).
By \lref{l:vd}, we are done if $\VVoldwd = \TT$ (\ie, if all
the $\tilde{f}_v$, $v\in \TT$ have degree $d$).
If this is not the case, we proceed as follows.

Observe that \lref{l:pok} and \pref{p:algebre}(ii)
imply that $\sumvd \tilde{f}_v(x_v)\,\partial_{p_v}$ is in $\MM$.
Subtracting this from  \eref{e:newnotation} shows that
\begin{equ}
\sum_{v\in \TT\setminus \VVoldwd}  \tilde{f}_v(x_v)\partial_{p_v}\in \MM~.
\end{equ}
Thus, we can start the above procedure again with this new ``smaller" vector
field, each component being a polynomial of degree at most
\begin{equ}
d' \equiv \max_{v \in  \TT\setminus \VVoldwd} d_v~,
\end{equ}
with obviously $d'<d$. Defining then $\DD^{\,d'} = \{v\in \TT~|~d_v = d'\}$, we
get as in
\lref{l:vd} that \eref{e:xdpudpudansTi} holds for all $\TT_i \subset
\DD^{\,d'}$.
We then proceed like this inductively, dealing at each step with the components
 of highest degree and ``removing'' them, until all the remaining
 components
 have the same degree $d^-$ (which is equal to $\min_{v\in\TT}d_v$).
If $d^{-} \geq 2$ we obtain again as in \lref{l:vd}  that
 \eref{e:xdpudpudansTi} holds for all $\TT_i \subset \DD^{\,d^{-}}$.
And if $d^{-} = 1$, the conclusion follows from \eref{e:d1}.
Thus, \eref{e:xdpudpudansTi} holds for every equivalence class $\TT_i$,
regardless of the degree of the polynomials involved.
The proof of \tref{t:mainprop} is complete.
\qed\end{proof}

\begin{remark}\label{r:414}
Our method also covers the case where
each particle $v\in \VV$ can have an arbitrary positive mass $m_v$.
The proofs work the same way, if we replace the functions
$\tilde{f}_v$ with $\tilde{f}_v = V_{cv}''(x_v) / (m_c m_v)$.
Thus, if for example all the $V''_{cv}$, $v\in \TT$ are the same, but all the
particles
in $\TT$ have distinct masses, then all the new $\tilde{f}_v$ are different,
and
the particles in $\TT$ belong each to a separate $\TT_i$.
\end{remark}

\section{Controlling a network}\label{s:controlnet}

We now show how  \tref{t:mainprop} can be used recursively
to control a large class of networks. The idea is very simple: at each step
of the recursion, we apply \tref{t:mainprop} to a controllable particle (or a
set of such)
 in order to show that some neighboring vertices are also controllable.
Starting this procedure with the  particles
in $\VV_*$ (which are controllable by the definition of $\MM$),
we obtain under certain conditions that the whole network is
controllable.

In order to make the distinction clear, we will say that a particle $c$
is a {\em controller} if it is controllable and if we intend to use it as a
starting point to control other particles.

\begin{definition} Let $\JJ$ be the collection of jointly controllable sets
(\ie,
of sets $A\subset \VV$ such that $\sum_{v\in A}\partial_{p_v} \in \MM$,
and therefore also
$\sum_{v\in A}\partial_{q_v}$ by \lref{l:piimpliesqi}).
\end{definition}

Obviously, a particle $v$ is controllable iff $\{v\} \in \JJ$. The next lemma
shows what we ``gain" in $\JJ$ when we apply \tref{t:mainprop} to
 a controller $c$. Remember that the set $\TT^c$ of first neighbors of $c$ is
partitioned into equivalence classes $\TT^c_i$, as discussed in
\sref{s:neighcontroller}.

\begin{lemma}\label{l:lemmeintersections} Let $c \in \VV$ be a controller.
Then,
\begin{itemize}
\item[(i)]for all $i$,
\begin{equ}
\TT_i^c \in \JJ ~,
\end{equ}
\item[(ii)]for all $i$ and for all $A\in \JJ$ the sets
\begin{equ}\label{e:newsetsinJ}
A\cap \TT_i^c, \qquad A\setminus \TT_i^c \quad \text{ and } \quad \TT_i^c
\setminus A ~
\end{equ}
are in $\JJ$.
\end{itemize}
\end{lemma}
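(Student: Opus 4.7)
The plan is to deduce (i) directly from \tref{t:mainprop} and then derive (ii) by computing one double commutator, with the remaining two set-membership claims following by linearity. Specifically, the statement \eref{e:vectorfield} in \tref{t:mainprop} is literally the assertion that $\TT_i^c \in \JJ$, so (i) requires no further argument.

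For (ii), I would pick the two vector fields
\begin{equ}
Y \equiv \sum_{v\in A}\partial_{p_v} \in \MM \quad\text{and}\quad
W \equiv \sum_{w\in \TT_i^c}(q_c-q_w+\delta_{cw})\partial_{p_w}\in \MM,
\end{equ}
where membership of $W$ in $\MM$ is provided by \eref{e:vectorfieldx}. Then I would compute $\scomm{Y}{W} = [[X_0,Y],W]$ directly (rather than invoking \lref{l:doublecomm}, which was set up only for sums indexed by $\TT^c$). Using \eref{e:sommevina}, $[X_0,Y]$ equals $-\sum_{v\in A}\partial_{q_v}$ plus a constant vector field supported on $\VV_*\cap A$; the latter commutes with $W$ and drops out. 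Expanding the remaining bracket using $\partial_{q_v}(q_c-q_w+\delta_{cw}) = \mathbf{1}_{v=c}-\mathbf{1}_{v=w}$ and summing over $v\in A$ yields
\begin{equ}
\scomm{Y}{W} \;=\; \sum_{w\in \TT_i^c\cap A}\partial_{p_w} \;-\; \mathbf{1}_{c\in A}\sum_{w\in \TT_i^c}\partial_{p_w}.
\end{equ}
Here I use that $c\notin \TT^c$ (no self-edges), so the $\mathbf{1}_{v=c}$ contribution collapses cleanly into the indicator $\mathbf{1}_{c\in A}$ outside the sum over $w$. Since $\sum_{w\in \TT_i^c}\partial_{p_w}\in\MM$ by part (i), the linear structure of $\MM$ gives $\sum_{w\in \TT_i^c\cap A}\partial_{p_w}\in\MM$, i.e., $A\cap \TT_i^c \in \JJ$.

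From there, the other two memberships are purely bookkeeping: $\TT_i^c\setminus A$ and $A\setminus \TT_i^c$ correspond, respectively, to the differences $\sum_{w\in \TT_i^c}\partial_{p_w}-\sum_{w\in \TT_i^c\cap A}\partial_{p_w}$ and $\sum_{v\in A}\partial_{p_v}-\sum_{w\in \TT_i^c\cap A}\partial_{p_w}$, both of which lie in $\MM$ by the linearity of $\MM$ together with (i) and the previous step.

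I do not expect a real obstacle: the only subtlety is noticing that $A$ may contain vertices outside $\TT^c$ (so \lref{l:doublecomm} does not apply verbatim) and that one must track whether $c$ itself belongs to $A$, but once the indicator $\mathbf{1}_{c\in A}$ is isolated, part (i) immediately absorbs it.
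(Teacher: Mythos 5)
Your proposal is correct and essentially mirrors the paper's argument. The only cosmetic difference is that the paper commutes $W$ directly with $\sum_{v\in A}\partial_{q_v}$ (which lies in $\MM$ immediately because $A\in\JJ$, via \lref{l:piimpliesqi}), whereas you reach the same bracket by forming the double commutator $\scomm{Y}{W}=[[X_0,Y],W]$; since $[X_0,Y]$ equals $-\sum_{v\in A}\partial_{q_v}$ plus a constant vector field that commutes with $W$, this reduces to the paper's single-commutator computation and yields the identical expression and conclusion.
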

We illustrate some possibilities in \fref{f:refinement}.
\begin{figure}[ht]
\centering
\subfloat[]{\scalebox{0.7}{\begin{tikzpicture}[segment amplitude=10pt, line
     width=0.7pt]
\tikzstyle{every node}=[font=\large]

\draw[] (0,0) ellipse (15pt and 15pt);
\draw[] (-1.7,-1.2) ellipse (15pt and 15pt);
\draw[rotate around={-70:(1.25,1.35)}] (1.25,1.35) ellipse (30pt and 15pt);
\draw[rotate around={-65:(-2.15,0.6)}] (-2.15,0.6) ellipse (25pt and 15pt);
\draw[] (-0.5,2.5) ellipse (25pt and 30pt);

\path[draw=black, fill=white] (-1,2) circle (1mm);
\path[draw=black, fill=white] (0,2) circle (1mm);
\path[draw=black, fill=white] (1,2) circle (1mm);

\path[draw=black, fill=white] (-0.9,2.9) circle (1mm);
\path[draw=black, fill=white] (-0.2,3.1) circle (1mm);

\path[draw=black, fill=white] (1.5,0.7) circle (1mm);

\path[draw=black, fill=white] (-2.3,0.9) circle (1mm);
\path[draw=black, fill=white] (-2,0.3) circle (1mm);

\path[draw=black, fill=white] (0.9,-1.8) circle (1mm);
\path[draw=black, fill=white] (1.8,-0.9) circle (1mm);

\path[draw=black, fill=white] (-1.7,-1.2) circle (1mm);

\path[draw=black, fill=gray] (0,0) circle (1mm)node[right=4pt, above=1pt]{${c}$};
\end{tikzpicture}}
}
\hfill%
\subfloat[]{\scalebox{0.7}{\begin{tikzpicture}[segment amplitude=10pt, line
     width=0.7pt] 
\tikzstyle{every node}=[font=\large]

\draw[color=gray] (0,0) ellipse (15pt and 15pt);
\draw[color=gray](-1.7,-1.2) ellipse (15pt and 15pt);
\draw[color=gray, rotate around={-70:(1.25,1.35)}] (1.25,1.35) ellipse (30pt and 15pt);
\draw[color=gray, rotate around={-65:(-2.15,0.6)}] (-2.15,0.6) ellipse (25pt and 15pt);
\draw[color=gray] (-0.5,2.5) ellipse (25pt and 30pt);

\draw[snake=coil, segment amplitude=0.5pt] (0,0)   -- (-1,2);
\draw[snake=coil, segment amplitude=0.5pt] (0,0)   -- (0,2);
\draw[snake=coil, segment amplitude=0.5pt] (0,0)   -- (1,2);

\draw[snake=coil, segment amplitude=2pt] (0,0)   -- (0.9,-1.8);
\draw[snake=coil, segment amplitude=2pt] (0,0)   -- (1.8,-0.9);

\draw[snake=coil, segment amplitude=5pt] (0,0)   -- (-1.7,-1.2);

\node[rotate=-0,minimum height=20pt, minimum width=80pt,draw=black] (rect) at (0,2) {};
\node[rotate=-0,minimum height=20pt, minimum width=20pt,draw=black] (rect) at (-1.7,-1.2) {};
\node[rotate=45,minimum height=20pt, minimum width=60pt,draw=black] (rect) at (1.35,-1.35) {};

\path[draw=black, fill=white] (-1,2) circle (1mm);
\path[draw=black, fill=white] (0,2) circle (1mm);
\path[draw=black, fill=white] (1,2) circle (1mm);

\path[draw=black, fill=white] (-0.9,2.9) circle (1mm);
\path[draw=black, fill=white] (-0.2,3.1) circle (1mm);

\path[draw=black, fill=white] (1.5,0.7) circle (1mm);

\path[draw=black, fill=white] (-2.3,0.9) circle (1mm);
\path[draw=black, fill=white] (-2,0.3) circle (1mm);

\path[draw=black, fill=white] (0.9,-1.8) circle (1mm);
\path[draw=black, fill=white] (1.8,-0.9) circle (1mm);

\path[draw=black, fill=white] (-1.7,-1.2) circle (1mm);

\path[draw=black, fill=gray] (0,0) circle (1mm)node[right=10pt, above=0pt]{${c}$};
\end{tikzpicture}}
}
\hfill%
\subfloat[]{\scalebox{0.7}{\begin{tikzpicture}[segment amplitude=10pt, line
     width=0.7pt] 
\tikzstyle{every node}=[font=\large]

\draw[] (0,0) ellipse (15pt and 15pt);
\draw[] (-1.7,-1.2) ellipse (15pt and 15pt);
\draw[rotate around={-65:(-2.15,0.6)}] (-2.15,0.6) ellipse (25pt and 15pt);
\draw[rotate around={48:(1.4,-1.38)}] (1.4,-1.38) ellipse (30pt and 15pt);
\draw[] (-0.5,2) ellipse (25pt and 10pt);
\draw[rotate around={15:(-0.5,3)}] (-0.5,3) ellipse (25pt and 10pt);
\draw[] (1,2) ellipse (15pt and 15pt);
\draw[] (1.5,0.7) ellipse (15pt and 15pt);

\path[draw=black, fill=white] (-1,2) circle (1mm);
\path[draw=black, fill=white] (0,2) circle (1mm);
\path[draw=black, fill=white] (1,2) circle (1mm)node[right=5pt, above=2pt]{${c'}$};

\path[draw=black, fill=white] (-0.9,2.9) circle (1mm);
\path[draw=black, fill=white] (-0.2,3.1) circle (1mm);

\path[draw=black, fill=white] (1.5,0.7) circle (1mm)node[right=5pt, above=2pt]{${c''}$};

\path[draw=black, fill=white] (-2.3,0.9) circle (1mm);
\path[draw=black, fill=white] (-2,0.3) circle (1mm);

\path[draw=black, fill=white] (0.9,-1.8) circle (1mm);
\path[draw=black, fill=white] (1.8,-0.9) circle (1mm);

\path[draw=black, fill=white] (-1.7,-1.2) circle (1mm);

\path[draw=black, fill=gray] (0,0) circle (1mm)node[right=4pt, above=1pt]{${c}$};
\end{tikzpicture}}
}
\caption{a: A controller $c$ and a few sets in $\JJ$, shown as ovals.
b: The equivalence classes $\TT^c_i$ are shown as rectangles.
(Only the edges incident to $c$ are shown.)
c: New sets ``appear" in $\JJ$. In particular,  $c'$ and $c''$
 are controllable.}
\label{f:refinement}
\end{figure}
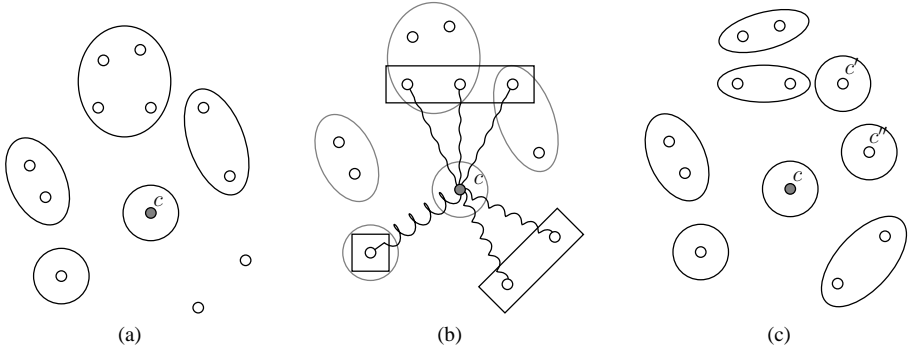

\begin{proof}
(i). This is an immediate consequence of \eref{e:vectorfield} and the
definition
of $\JJ$.

\noindent(ii). We consider an equivalence class $\TT^c_i$ and a set $A\in \JJ$.
By \eref{e:vectorfieldx} we find that
\begin{equ}\null
[\sum_{v\in\TT_i^c} (q_c-q_v+\delta _{cv})\partial _{p_v}, \sum_{v\in A}
\partial_{q_v}]
= \sum_{v\in A\cap \TT^c_i}\partial_{p_v} - \1_{c\in A}\cdot  \sum_{v\in
\TT^c_i}\partial_{p_v}
\end{equ}
is in $\MMà$. By the linear structure of $\MM$ and since
 $\sum_{v\in \TT^c_i}\partial_{p_v} $ is in $\MM$ by \eref{e:vectorfield},
we can discard the second term and we find $\sum_{v\in A\cap
\TT^c_i}\partial_{p_v}
\in \MM$. This proves that $A\cap \TT^c_i$ is in $\JJ$. Then,
subtracting $\sum_{v\in A\cap \TT^c_i}\partial_{p_v} $ from $\sum_{v\in A}
\partial_{p_v}$ (resp.~from
$\sum_{v\in \TT^c_i} \partial_{p_v}$) shows  that
 $\sum_{v\in A \setminus \TT^c_i} \partial_{p_v}$ (resp.~$\sum_{v\in  \TT^c_i
\setminus A}\partial_{p_v}$)
is in $\MM$, which completes the proof of (ii).
\qed\end{proof}

We can now give an algorithm that applies \lref{l:lemmeintersections}
recursively, and that can be used to show that a large variety
of networks is controllable.

\begin{proposition} Consider the following algorithm that builds
 step by step a collection of sets $\WW \subset \JJ$
and a list of controllable particles $K$.

Start with $\WW = \left\{\{v\}~|~v\in \VV_*\right\}$ and put
(in any order) the vertices of $\VV_*$ in $K$.
\begin{enumerate}
\item Take the first unused controller $c \in K$.
\item Add each equivalence class $\TT^c_i$ to $\WW$.
\item For each $\TT^c_i$ and  each $A\in \WW$
 add the sets of \eref{e:newsetsinJ} to $\WW$.
\item If  in 2. or 3. new singletons appear in $\WW$, add
the corresponding vertices (in any order) at the end of $K$.
\item Consider $c$ as used. If there is an unused controller in $K$, start
again at 1. Else, stop.
\end{enumerate}
We have the following result: if in the end $K$ contains all the vertices of
$\VV$, then
 the network is controllable.
\end{proposition}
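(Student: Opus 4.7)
My plan is to establish by induction on the execution of the algorithm the following invariant: at every stage, every set $A\in \WW$ lies in $\JJ$, and every vertex ever placed into $K$ is individually controllable. Once this invariant is secured, the conclusion of the proposition is immediate: if at termination $K\supset \VV$, then $\partial_{p_v}\in \MM$ for all $v\in \VV$, and by \lref{l:piimpliesqi} also $\partial_{q_v}\in \MM$ for all $v$, so the network is controllable in the sense of \dref{d:controlnet}.

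For the base case I would note that the initial $\WW=\{\{v\}~|~v\in \VV_*\}$ is contained in $\JJ$ because $\partial_{p_v}\in \MM$ for every $v\in \VV_*$ by the very definition of $\MM$, and for the same reason the vertices initially placed into $K$ are controllable.

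For the inductive step I would suppose the invariant holds at the moment step~1 pulls an unused controller $c\in K$. Since $c$ is controllable by hypothesis, \tref{t:mainprop} is available. Step~2 adjoins each equivalence class $\TT^c_i$ to $\WW$, and these lie in $\JJ$ by \lref{l:lemmeintersections}(i). Step~3 adjoins $A\cap \TT^c_i$, $A\setminus \TT^c_i$, and $\TT^c_i\setminus A$ for each $i$ and each $A\in \WW$; processing these additions one at a time and invoking \lref{l:lemmeintersections}(ii) after each one preserves $\WW\subset \JJ$. Step~4 then promotes to $K$ any vertex $v$ for which $\{v\}$ has newly appeared in $\WW$; since $\{v\}\in \JJ$ means precisely $\partial_{p_v}\in \MM$, every such $v$ is controllable, so the invariant is maintained.

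I do not foresee any serious obstacle: the proposition is essentially a procedural packaging of the closure properties already established in \tref{t:mainprop} and \lref{l:lemmeintersections}, and no new fact about the Lie algebra $\MM$ needs to be proved. The only mildly delicate point is to clarify that step~3's ``for each $A\in \WW$'' is to be read as successive additions, with $\WW$ possibly growing during the step; this is harmless since \lref{l:lemmeintersections}(ii) only requires $A\in \JJ$, which is exactly the invariant maintained throughout.
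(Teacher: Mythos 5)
Your argument is essentially the paper's own proof, just written out more explicitly as an induction: the paper likewise observes that by \lref{l:lemmeintersections} the collection $\WW$ stays inside $\JJ$ at each step and $K$ only ever receives controllers, and concludes. Your clarification about processing step~3's additions one at a time is a reasonable reading of the algorithm and does not change the substance.
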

\begin{proof}By \lref{l:lemmeintersections}, the collection $\WW$ remains at
each step
a subset of $\JJ$, and $K$ contains only controllers. Thus, the result holds
by construction.
\qed\end{proof}

The algorithm stops after at most $|\VV|$ iterations, and one can
 show that the result does not depend on the order in which we use the
controllers.
This algorithm is probably the easiest to implement, but does not
give much insight into what really makes
 a network controllable with our criteria. For this reason, we now formulate a
 similar result in terms of equivalence with respect to a {\em set} of
controllers,
 which underlines the role of the ``cooperation'' of several controllers.

\begin{definition}\label{d:covering}
We consider a set $C$ of controllers and denote by $\NN(C)$
the set of first neighbors of $C$ that are not themselves in $C$. We say that
two particles $v, w\in \NN(C)$ are \ctop if $v$ and $w$ are connected to $C$
in exactly the same way, \ie, if for every $c\in C$ the
 edges $\{c, v\}$ and $\{c, w\}$ are either both
present or both absent.

Moreover, we say that $v$ and $w$ are $C$-equivalent if they are
\ctop, and if in addition, for each $c\in C$ that is linked
to $v$ and $w$,
we have that $v$ and $w$ are equivalent with respect to $c$ (\ie, there is
a $\delta\in \real$ such that
$V''_{cv}(\argcdot) = V''_{cw}(\argcdot + \delta)$).
\end{definition}
The $C$-equivalence classes form a refinement of the
sets of \ctop, see \fref{f:refinementre}.

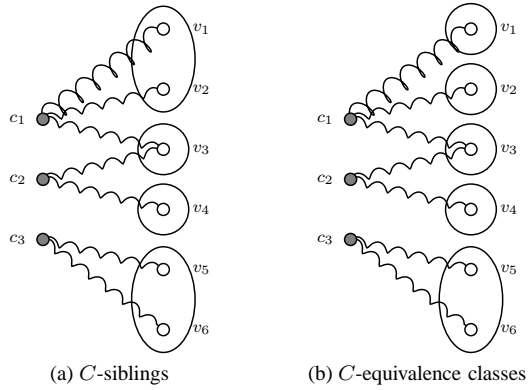
\begin{figure}[ht]
\centering

\subfloat[\ctop]{

\scalebox{0.8}{\begin{tikzpicture}[segment amplitude=10pt, line
     width=0.7pt] 


\draw[rotate around={0:(1.4,-1.38)}] (2,-0) ellipse (15pt and 25pt);
\draw[rotate around={0:(1.4,-1.38)}] (2,-2.5) ellipse (12pt and 12pt);
\draw[rotate around={0:(1.4,-1.38)}] (2,-1.5) ellipse (12pt and 12pt);
\draw[rotate around={0:(1.4,-1.38)}] (2,-4) ellipse (15pt and 25pt);

 \draw[snake=coil, segment amplitude=2pt] (0,-1)   -- (2,-0.5);
 \draw[snake=coil, segment amplitude=2pt] (0,-2)   -- (2,-1.5);
 \draw[snake=coil, segment amplitude=2pt] (0,-2)   -- (2,-2.5);
 \draw[snake=coil, segment amplitude=2pt] (0,-3)   -- (2,-3.5);

 \draw[snake=coil, segment amplitude=2pt] (0,-1)   -- (2,-1.5);
 \draw[snake=coil, segment amplitude=2pt] (0,-3)   -- (2,-4.5);

 \draw[snake=coil, segment amplitude=5pt] (0,-1)   -- (2,0.5);

\path[draw=black, fill=gray] (0,-1) circle (1mm)node[left=5pt]{${c_1}$};
\path[draw=black, fill=gray] (0,-2) circle (1mm)node[left=5pt]{${c_2}$};
\path[draw=black, fill=gray] (0,-3) circle (1mm)node[left=5pt]{${c_3}$};

\path[draw=black, fill=white] (2,0.5) circle (1mm)node[right=11pt]{$v_1$};
\path[draw=black, fill=white] (2,-0.5) circle (1mm)node[right=11pt]{$v_2$};
\path[draw=black, fill=white] (2,-1.5) circle (1mm)node[right=11pt]{$v_3$};
\path[draw=black, fill=white] (2,-2.5) circle (1mm)node[right=11pt]{$v_4$};
\path[draw=black, fill=white] (2,-3.5) circle (1mm)node[right=11pt]{$v_5$};
\path[draw=black, fill=white] (2,-4.5) circle (1mm)node[right=11pt]{$v_6$};

\end{tikzpicture}}

}
\qquad\qquad
\subfloat[$C$-equivalence classes]{
\scalebox{0.8}{\begin{tikzpicture}[segment amplitude=10pt, line
     width=0.7pt] 

\draw[rotate around={0:(1.4,-1.38)}] (2,-2.5) ellipse (12pt and 12pt);
\draw[rotate around={0:(1.4,-1.38)}] (2,-1.5) ellipse (12pt and 12pt);
\draw[rotate around={0:(1.4,-1.38)}] (2,0.5) ellipse (12pt and 12pt);
\draw[rotate around={0:(1.4,-1.38)}] (2,-0.5) ellipse (12pt and 12pt);
\draw[rotate around={0:(1.4,-1.38)}] (2,-4) ellipse (15pt and 25pt);

 \draw[snake=coil, segment amplitude=2pt] (0,-1)   -- (2,-0.5);
 \draw[snake=coil, segment amplitude=2pt] (0,-2)   -- (2,-1.5);
 \draw[snake=coil, segment amplitude=2pt] (0,-2)   -- (2,-2.5);
 \draw[snake=coil, segment amplitude=2pt] (0,-3)   -- (2,-3.5);

 \draw[snake=coil, segment amplitude=2pt] (0,-1)   -- (2,-1.5);
 \draw[snake=coil, segment amplitude=2pt] (0,-3)   -- (2,-4.5);

 \draw[snake=coil, segment amplitude=5pt] (0,-1)   -- (2,0.5);

\path[draw=black, fill=gray] (0,-1) circle (1mm)node[left=5pt]{${c_1}$};
\path[draw=black, fill=gray] (0,-2) circle (1mm)node[left=5pt]{${c_2}$};
\path[draw=black, fill=gray] (0,-3) circle (1mm)node[left=5pt]{${c_3}$};

\path[draw=black, fill=white] (2,0.5) circle (1mm)node[right=11pt]{$v_1$};
\path[draw=black, fill=white] (2,-0.5) circle (1mm)node[right=11pt]{$v_2$};
\path[draw=black, fill=white] (2,-1.5) circle (1mm)node[right=11pt]{$v_3$};
\path[draw=black, fill=white] (2,-2.5) circle (1mm)node[right=11pt]{$v_4$};
\path[draw=black, fill=white] (2,-3.5) circle (1mm)node[right=11pt]{$v_5$};
\path[draw=black, fill=white] (2,-4.5) circle (1mm)node[right=11pt]{$v_6$};

\end{tikzpicture}}

}
\caption{
Illustration of \dref{d:covering}. We assume that all the springs are
identical,
except for the edge $\{c_1, v_1\}$. The particles $v_1,\dots, v_6$ form 4 sets
of \ctop,
with $C=\{c_1, c_2, c_3\}$.
The one containing $v_1$ and $v_2$ is further split
into two $C$-equivalence classes, since
$v_1$ and $v_2$ are by assumption inequivalent with respect to $c_1$.}
\label{f:refinementre}
\end{figure}

\begin{proposition}\label{p:controllersequiv} Let $C$ be a set of controllers.
 Then, for each $C$-equivalence class
 $U\subset \NN(C)$, we have $U\in \JJ$.
\end{proposition}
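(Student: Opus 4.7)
The plan is to express a $C$-equivalence class $U$ as a finite sequence of intersections and set differences among objects already known to lie in $\JJ$, and then invoke \lref{l:lemmeintersections} repeatedly. Concretely, let $C_U\subseteq C$ denote the (common) set of controllers in $C$ adjacent to the particles of $U$; this is well-defined since all elements of $U$ are \ctop. By $C$-equivalence, for each $c\in C_U$ all particles of $U$ belong to a single equivalence class $\TT^c_{i(c)}$ with respect to $c$, and for each $c'\in C\setminus C_U$ no particle of $U$ belongs to $\TT^{c'}$. Thus
\begin{equ}
U \;=\; \Bigl(\bigcap_{c\in C_U}\TT^c_{i(c)}\Bigr)\,\setminus\,\Bigl(\,\bigcup_{c'\in C\setminus C_U}\TT^{c'}\;\cup\; C\Bigr)~.
\end{equ}

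First I would pick any $c_0\in C_U$ and note that $\TT^{c_0}_{i(c_0)}\in\JJ$ by \lref{l:lemmeintersections}(i) (itself a restatement of \tref{t:mainprop}). Then, iterating \lref{l:lemmeintersections}(ii) with this set as the current $A\in\JJ$ and intersecting successively with $\TT^c_{i(c)}$ for each remaining $c\in C_U$, I obtain $A_1\equiv\bigcap_{c\in C_U}\TT^c_{i(c)}\in\JJ$. Next, for each $c'\in C\setminus C_U$ the neighbor set $\TT^{c'}$ is the disjoint union of its equivalence classes $\TT^{c'}_j$, each of which lies in $\JJ$; by repeatedly applying \lref{l:lemmeintersections}(ii) in the form $A\setminus\TT^{c'}_j$, I can remove each equivalence class one by one from $A_1$. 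This yields $A_2 \equiv A_1\setminus\bigcup_{c'\in C\setminus C_U}\TT^{c'}\in\JJ$.

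Finally, any element of $C$ that might still sit in $A_2$ (namely a controller $c^\ast\in C\cap A_2$) can be removed using the linear structure of $\MM$: since $\{c^\ast\}\in\JJ$ for every $c^\ast\in C$, subtracting $\partial_{p_{c^\ast}}$ from $\sum_{v\in A_2}\partial_{p_v}\in\MM$ keeps the result in $\MM$, so $A_2\setminus\{c^\ast\}\in\JJ$. Doing this for the finitely many surviving elements of $C$ gives $A_2\setminus C\in\JJ$. A short verification of the set-theoretic identity above then shows $A_2\setminus C=U$, completing the proof.

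I expect no real obstacle beyond careful bookkeeping: the only slightly delicate point is making sure the final decomposition really equals $U$, i.e., that a particle surviving all intersections and removals has exactly $C_U$ as its set of $C$-neighbors (so it lies in $\NN(C)$ and is a $C$-sibling of $U$) and is equivalent to $U$ with respect to each $c\in C_U$. Both follow directly from the definitions, since $v\in\bigcap_{c\in C_U}\TT^c_{i(c)}$ forces $v$ to be adjacent and equivalent to $U$ with respect to each $c\in C_U$, while $v\notin\TT^{c'}$ for $c'\in C\setminus C_U$ rules out any further $C$-adjacency.
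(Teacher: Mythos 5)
Your proof is correct and follows essentially the same route as the paper's: both express $U$ via intersections and set differences among the equivalence classes $\TT^c_i$ and invoke \lref{l:lemmeintersections} repeatedly. The only difference is bookkeeping in the removal step — the paper identifies, for each unwanted particle $v^*_j$ in the intersection, one class $\SS^*_j=\TT^{c^*_j}_i$ witnessing an extra $C$-adjacency and removes just those, whereas you remove all of $\TT^{c'}$ for every $c'\in C\setminus C_U$ (and the controllers themselves at the end); your version gives a cleaner closed-form set identity at the cost of a few more removals.
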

\begin{proof}See \fref{f:fstar}. Let $U = \{v_1, \dots, v_n\} \subset \NN(C)$
be a $C$-equivalence class.
We denote by $c_1, \dots, c_k$ the controllers in $C$ that are linked to $v_1$,
 and therefore also to $v_2, \dots, v_n$, since the elements of $U$ are
 \ctop. For each $j\in \{1, \dots, k\}$, there is a $\TT^{c_j}_{i}$
with $v_1, \dots, v_n \in \TT^{c_j}_{i}$, and we define $\SS_j \equiv
\TT^{c_j}_{i}\setminus C$.
We consider the set
\begin{equ}
\widehat{U}\equiv \bigcap_{j=1}^k \SS_j~.
\end{equ}
Clearly, $U\subset \widehat{U}$,
 and $\widehat{U} \in \JJ$ by \lref{l:lemmeintersections}. We have
$\widehat{U} = \{v_1, \dots, v_n, v^*_1, \dots, v^*_r\}$, where the $v^*_j$
are those particles that are equivalent to $v_1, \dots, v_n$ from the
point of view of $c_1, \dots, c_k$, but that are also connected to
some controller(s) in $C\setminus\{c_1, \dots, c_k\}$.
In particular, for each $j \in \{1, \dots, r\}$, there is a $c^*_j \in
C\setminus\{c_1, \dots, c_k\}$
and an $i$ such that $v^*_j$ is in $\SS^*_j \equiv  \TT^{c^*_j}_i$. By
construction,
$\SS_j^* \cap U=\emptyset$. Thus,
\begin{equ}
 \widehat{U} \setminus \bigcup_{j=1}^r \SS_j^* = U~.
\end{equ}
Starting from $\widehat{U} \in \JJ$ and  removing one by one the $\SS_j^*$,
we find by  \lref{l:lemmeintersections}(ii) that $U$ is indeed in
$\JJ$, as we claim.
\qed
\end{proof}
\begin{figure}[ht]
\centering

\scalebox{1.0}{\begin{tikzpicture}[segment amplitude=10pt, line
     width=0.7pt]


 \draw[snake=coil, segment amplitude=0pt] (0,-0.75)   -- (2,-0.75);
 \draw[snake=coil, segment amplitude=0pt] (0,-0.75)   -- (2,-1.5);
 \draw[snake=coil, segment amplitude=0pt] (0,-0.75)   -- (2,-2.5);
 \draw[snake=coil, segment amplitude=0pt] (0,-0.75)   -- (2,-3.5);
 
 \draw[snake=coil, segment amplitude=0pt] (0,-2)   -- (2,-1.5);
 \draw[snake=coil, segment amplitude=0pt] (0,-2)   -- (2,-2.5);
 \draw[snake=coil, segment amplitude=0pt] (0,-2)   -- (2,-3.5);

 \draw[snake=coil, segment amplitude=0pt] (0,-3.25)   -- (2,-3.5);
 \draw[snake=coil, segment amplitude=0pt] (0,-3.25)   -- (2,-0.75);

\path[draw=black, fill=gray] (0,-0.75) circle (1mm)node[left=5pt]{${c_1}$};
\path[draw=black, fill=gray] (0,-2) circle (1mm)node[left=5pt]{${c_2}$};
\path[draw=black, fill=gray] (0,-3.25) circle (1mm)node[left=5pt]{${c_1^*}$};

\path[draw=black, fill=white] (2,-0.75) circle (1mm)node[right=11pt]{$w$};
\path[draw=black, fill=white] (2,-1.5) circle (1mm)node[right=11pt]{$v_1$};
\path[draw=black, fill=white] (2,-2.5) circle (1mm)node[right=11pt]{$v_2$};
\path[draw=black, fill=white] (2,-3.5) circle (1mm)node[right=11pt]{$v_1^*$};

\end{tikzpicture}}
\caption{Illustration of the proof of \pref{p:controllersequiv} for identical
 springs. We consider the $C$-equivalence class $U=\{v_1, v_2\}$,
where $C$ contains $c_1, c_2, c_1^*$ and possibly other particles (not shown)
that are not linked to $v_1, v_2$.
With the notation of the proof, we have $\SS_1 = \{w, v_1, v_2, v^*_1\}$ and
$\SS_2 = \{v_1, v_2, v_1^*\}$
so that  $\widehat
 U=\SS_1\cap \SS_2 = \{v_1,v_2,v_1^*\}$. Since $v_1^*$ belongs to $\SS^*_1 =
\{w, v_1^*\}$,
we find $\widehat{U} \setminus \SS^*_1 = U$.
}\label{f:fstar}
\end{figure}

An immediate consequence is
\begin{corollary}\label{c:aloneinequiv} Let $C$ be a set of controllers.
If a vertex $v\in \NN(C)$
is alone in its $C$-equivalence class, then it is controllable.
\end{corollary}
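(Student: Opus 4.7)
The plan is to derive this directly from \pref{p:controllersequiv} together with \lref{l:piimpliesqi}, since the content of the corollary is essentially the singleton case of the proposition. First I would note that the hypothesis ``$v$ is alone in its $C$-equivalence class'' means precisely that $\{v\}$ is itself a $C$-equivalence class of $\NN(C)$.

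Applying \pref{p:controllersequiv} to this particular class $U=\{v\}$, we immediately obtain $\{v\}\in\JJ$, which by the definition of $\JJ$ means that $\partial_{p_v}\in\MM$. Then, invoking \lref{l:piimpliesqi} with $A=\{v\}$ gives $\partial_{q_v}\in\MM$ as well. By \dref{d:controlnet}, this is exactly what it means for $v$ to be controllable, and the proof is complete.

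There is essentially no obstacle here: the hard work has already been done in \pref{p:controllersequiv}, where joint controllability of each $C$-equivalence class was established by intersecting the jointly controllable sets $\TT^{c_j}_i$ associated with the controllers linked to $U$ and removing the spurious particles in $\widehat{U}\setminus U$. Once that is in hand, the corollary is a one-line specialization to a singleton class, for which ``jointly controllable'' and ``controllable'' coincide (modulo the standard upgrade from $\partial_{p_v}$ to $\partial_{q_v}$ provided by \lref{l:piimpliesqi}).
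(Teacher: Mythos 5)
Your proof is correct and matches the paper's intent exactly: the corollary is stated in the paper as ``an immediate consequence'' of \pref{p:controllersequiv}, and your argument (apply \pref{p:controllersequiv} to the singleton class $U=\{v\}$ to get $\{v\}\in\JJ$, then upgrade $\partial_{p_v}\in\MM$ to $\partial_{q_v}\in\MM$ via \lref{l:piimpliesqi}) is precisely that immediate specialization.
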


Applying this recursively, we obtain

\begin{theorem}\label{t:resultCk}
We start with $C_0 \equiv \VV_*$. For each $k\geq 0$, let
\begin{equ}
C_{k+1} \equiv C_k\cup \{v\in \NN(C_k)~|~\text{ no other vertex
in $\NN(C_k)$ is $C_k$-equivalent to $v$}\}~.
\end{equ}
Then, if $C_k = \VV$ for some $k\geq 0$, the network is controllable.
\end{theorem}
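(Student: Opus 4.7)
The proof is a direct induction on $k$, leveraging \Cref{c:aloneinequiv} at each stage. The plan is to show, by induction on $k$, that every vertex of $C_k$ is controllable in the sense of \dref{d:controlnet}; once we know this, the hypothesis $C_k = \VV$ for some $k$ immediately yields that the whole network is controllable.

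For the base case $k=0$, we have $C_0 = \VV_*$. By the definition of $\MM$, we have $\partial_{p_v} \in \MM$ for every $v \in \VV_*$, and then \lref{l:piimpliesqi} applied with $A=\{v\}$ gives $\partial_{q_v}\in\MM$, so every vertex of $C_0$ is controllable.

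For the inductive step, assume every vertex of $C_k$ is controllable. Then $C_k$ qualifies as a set of controllers in the sense of \dref{d:covering}, so \Cref{c:aloneinequiv} applies: every $v \in \NN(C_k)$ that is alone in its $C_k$-equivalence class is controllable. Since $C_{k+1}$ is obtained from $C_k$ by adjoining exactly such vertices, every vertex of $C_{k+1}$ is controllable. This completes the induction.

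If $C_k = \VV$ for some $k \geq 0$, the induction shows that every $v \in \VV$ satisfies $\partial_{q_v}, \partial_{p_v} \in \MM$, which is precisely the definition of the network being controllable. There is no real obstacle here: the bulk of the work has been done in \tref{t:mainprop}, \lref{l:lemmeintersections}, \pref{p:controllersequiv}, and \Cref{c:aloneinequiv}, and the current statement is a clean recursive packaging of \Cref{c:aloneinequiv}. The only point to notice is that \Cref{c:aloneinequiv} really requires every element of $C_k$ to be a controller (not merely jointly controllable with others), which is exactly what the induction hypothesis provides.
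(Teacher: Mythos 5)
Your proof is correct and follows the same approach as the paper's: the paper's own argument simply invokes \Cref{c:aloneinequiv} to conclude that each $C_k$ contains only controllers, which is exactly the induction you spell out in more detail.
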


\begin{proof}
By \Cref{c:aloneinequiv} we have that each $C_k$ contains only
controllers (remember also that $\VV_*$ contains only controllers
by the definition of $\MM$). Thus if $C_k = \VV$ for some $k\geq 0$ we find
that
all vertices are controllers, which is what we claim.
\qed\end{proof}

\section{Examples}\label{s:examples}

In this section we illustrate by several examples the range
of our controllability criteria.
\begin{example}\label{ex:61}
A single controller $c$ can control \emph{several} particles if the interaction
potentials
between $c$ and its neighbors have pairwise inequivalent second derivative. See
\fref{f:61}.
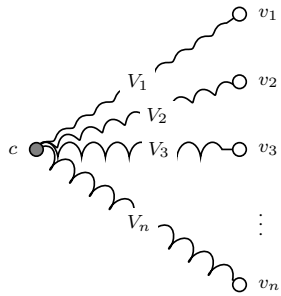
\begin{figure}[ht]
\centering
\scalebox{0.9}{\begin{tikzpicture}[segment amplitude=10pt, line
     width=0.7pt] 

 \draw[snake=coil, segment amplitude=4pt] (1,1)   --node[midway, right=0.5pt, below=-3.8pt, fill=white]{${V_n}$}  (4,-1);
 \draw[snake=coil, segment amplitude=3pt] (1,1)   --node[midway, right=1.3pt, fill=white]{${V_3}$}   (4,1);
 \draw[snake=coil, segment amplitude=2pt] (1,1)   --node[midway, right=0.5pt, fill=white]{${V_2}$}   (4,2);
 \draw[snake=coil, segment amplitude=1pt] (1,1)   -- node[midway, fill=white]{${V_1}$}  (4,3);

\path[draw=black, fill=gray] (1,1) circle (1mm)node[left=5pt]{${c}$};
\path[draw=black, fill=white] (4,3) circle (1mm)node[right=5pt]{${v_1}$};
\path[draw=black, fill=white] (4,2) circle (1mm)node[right=5pt]{${v_2}$};
\path[draw=black, fill=white] (4,1) circle (1mm)node[right=5pt]{${v_{3}}$};
\path[draw=black, fill=white] (4,-1) circle (1mm)node[right=5pt]{${v_n}$};
\path[draw=black, fill=white] (4,0) node[right=5pt]{${\vdots}$};

\end{tikzpicture}}
\caption{If no two springs are
equivalent, the $v_i$ are controllable.
Springs from the $v_i$ to other particles or from one $v_i$ to
another may exist but  are not shown. They do not change the conclusion.}
\end{figure}\label{f:61}
\end{example}

The example above does not use the topology of the network (\ie, the notion
of siblings), but only the inequivalence due
to the second derivative of the potentials. We have the following immediate
generalization, which we formulate as

\begin{theorem}\label{t:singletons}
Assume that  $\GG$ is connected, that $\VV_*$ is not empty, and that for each
$v\in \VV$, the first neighbors of $v$ are all pairwise
inequivalent with respect to $v$ (\ie, no two distinct neighbors $u, w$ of $v$
are such that $V''_{vu}(\argcdot) = V''_{vw}(\argcdot+\delta)$ for some
constant $\delta\in \real$). Then, the network is controllable.
\end{theorem}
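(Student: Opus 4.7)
The plan is to apply \tref{t:resultCk} directly to the ascending sequence $C_0 \equiv \VV_* \subset C_1 \subset C_2 \subset \dots$, and to show that under the hypothesis of pairwise inequivalence of neighbors, the sequence absorbs $\NN(C_k)$ entirely at each step. Combined with the connectedness of $\GG$ and the non-emptiness of $\VV_*$, this forces $C_k = \VV$ in finitely many steps.

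More precisely, I would prove by induction on $k$ that $C_{k+1} \supset C_k \cup \NN(C_k)$. For the inductive step, fix any $v \in \NN(C_k)$. By definition there is some $c \in C_k$ with $\{c,v\} \in \EE$. Suppose for contradiction that some other $w \in \NN(C_k)$ with $w \neq v$ is $C_k$-equivalent to $v$. Then in particular $v$ and $w$ are \ctop{} with respect to $C_k$, so the edge $\{c,w\}$ is also present; and moreover $v$ and $w$ are equivalent with respect to $c$. But then $v$ and $w$ are two distinct first neighbors of $c$ that are equivalent with respect to $c$, contradicting the hypothesis of the theorem. Hence $v$ is alone in its $C_k$-equivalence class, and by the definition of $C_{k+1}$ we get $v \in C_{k+1}$.

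Once this is established, the remainder is a short graph-theoretic argument: starting from the non-empty set $C_0 = \VV_*$, the chain $C_0 \subset C_1 \subset \dots$ must strictly increase at each step as long as $\NN(C_k) \neq \emptyset$, which happens precisely when $C_k \neq \VV$ (here we use that $\GG$ is connected). Since $\VV$ is finite, there is some $k$ with $C_k = \VV$, and \tref{t:resultCk} then concludes that the network is controllable.

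The only subtle point is the verification that the pairwise-inequivalence hypothesis, which is a \emph{local} statement about each vertex, is enough to collapse the entire $C_k$-equivalence class of $v$ to the singleton $\{v\}$; this works precisely because $C_k$-equivalence of $v$ and $w$ involves \emph{some} common controller $c$ (the one guaranteed by $v \in \NN(C_k)$), and equivalence with respect to that single $c$ is already forbidden by assumption. No genuinely hard computation is needed; the technical machinery is entirely hidden inside \tref{t:resultCk}.
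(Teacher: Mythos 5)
Your proof is correct and follows the same route as the paper's: invoke \tref{t:resultCk} and observe that the pairwise-inequivalence hypothesis forces $C_{k+1} = C_k \cup \NN(C_k)$ at every step, so connectedness yields $C_k = \VV$ for some $k$. You merely spell out the one-line observation (that any $w$ that were $C_k$-equivalent to $v$ would share a controller $c$ with $v$ and be equivalent to $v$ w.r.t.\ $c$, contradicting the hypothesis) which the paper leaves implicit; the ``induction'' framing is unnecessary since the argument is direct for each $k$.
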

\begin{proof} We use \tref{t:resultCk}. Observe that under these
assumptions, we have at each step $C_{k+1} = C_k \cup \NN(C_k)$.
Thus, since the network is connected, there is indeed a $k\geq 0$
such that $C_k = \VV$.
\qed\end{proof}

One can restate \tref{t:singletons} as a genericity condition:
\begin{corollary}\label{c:genericity}
Assume that  $\GG$ is connected, that $\VV_*$ is not empty, and that
for each $e\in \EE$ the degree of the polynomial $V_e$
is fixed (and is at least 3). Then, $\GG$ is almost surely controllable
if we pick the coefficients of each $V_e$ at random according
 to a probability law that is
absolutely continuous w.r.t.~Lebesgue.
\end{corollary}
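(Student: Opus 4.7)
The plan is to reduce \Cref{c:genericity} directly to \tref{t:singletons} and argue that the hypotheses of that theorem hold outside a Lebesgue-null set of coefficient choices. Concretely, \tref{t:singletons} asks that for every vertex $v \in \VV$ and every pair of distinct first neighbors $u, w$ of $v$, the polynomials $V''_{vu}$ and $V''_{vw}$ are inequivalent. Since $\VV$ and $\EE$ are finite, there are only finitely many such triples $(v, u, w)$, and by a union bound it is enough to show that for each fixed triple the event ``$V''_{vu}$ equivalent to $V''_{vw}$'' has probability zero.

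Fix one such triple. If the prescribed degrees of $V_{vu}$ and $V_{vw}$ differ, then so do the degrees of $V''_{vu}$ and $V''_{vw}$, and by \lref{l:45} (or directly from the definition of equivalence) the two second derivatives cannot be equivalent. So the only interesting case is when $V_{vu}$ and $V_{vw}$ have the same degree $d \geq 3$, hence $V''_{vu}$ and $V''_{vw}$ are polynomials of degree $d-2 \geq 1$. By \lref{l:45}, equivalence of $V''_{vu}$ and $V''_{vw}$ implies in particular that they share the same leading coefficient, which (up to the nonzero factor $d(d-1)$) is just the leading coefficient of $V_{vu}$, respectively of $V_{vw}$. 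Equality of these two leading coefficients defines a hyperplane in the joint coefficient space, hence a set of Lebesgue measure zero; since the joint law of all coefficients is absolutely continuous with respect to Lebesgue (being a product of absolutely continuous laws over the edges), this event has probability zero.

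Combining the two cases with the finite union bound gives that, almost surely, every vertex $v$ has pairwise inequivalent neighbors with respect to $v$, so \tref{t:singletons} applies and $\GG$ is controllable. I do not anticipate a real obstacle: the only substantive point is the classical fact that a proper algebraic subvariety of $\real^N$ has Lebesgue measure zero, which is used here in its simplest form (a single non-trivial linear relation on the coefficients). Everything else is bookkeeping over finitely many triples of edges.
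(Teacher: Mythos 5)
Your proposal is correct and takes essentially the same route as the paper: the paper simply states the corollary as a restatement of \tref{t:singletons}, leaving the genericity argument implicit, and you fill it in exactly as intended — a finite union bound over triples of a vertex with two distinct neighbors, plus the observation that equivalence of second derivatives forces equality of leading coefficients, a Lebesgue-null hyperplane condition.
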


\begin{example} The 1D chain (shown in \fref{f:chaine}) is
  controllable. Our theory applies when the interactions are polynomials of
degree at
  least 3; for a somewhat different variant, see \cite{EPR1999}.
To apply our criteria,
we start with $C = \{c\}$. Clearly, $v_1$ is alone in its
$C$-equivalence class, and
is therefore controllable by \Cref{c:aloneinequiv}. We then take $C'=\{c,
v_1\}$. Since
$v_2$ is alone in its $C'$-equivalence class, it is also controllable.
 Continuing like this, we find that
the whole chain is controllable.
\begin{figure}[ht]
\centering
\scalebox{1}{\begin{tikzpicture}[segment amplitude=10pt, line width=0.7pt] 

 \draw[snake=coil, segment amplitude=1pt] (0,0)   --   (2,0);
 \draw[snake=coil, segment amplitude=1pt] (2,0)   --   (4,0);
 \draw[snake=coil, segment amplitude=1pt] (4,0)   --   (6,0);
 \draw[snake=coil, segment amplitude=1pt] (6,0)   --   (8,0);
 \draw[snake=coil, segment amplitude=1pt] (8,0)   --   (10,0);

\path[draw=black, fill=gray] (0,0) circle (1mm)node[left=5pt]{${c}$};
\path[draw=black, fill=white] (2,0) circle (1mm)node[below=5pt]{${v_1}$};
\path[draw=black, fill=white] (4,0) circle (1mm)node[below=5pt]{${v_2}$};
\path[draw=black, fill=white] (6,0) circle (1mm)node[below=5pt]{${v_3}$};
\path[draw=black, fill=white] (8,0) circle (1mm)node[below=5pt]{${v_4}$};
\path[draw=black, fill=white] (10,0) circle (1mm)node[below=5pt]{${v_5}$};
\end{tikzpicture}}
\caption{A one-dimensional chain.}\label{f:chaine}
\end{figure}
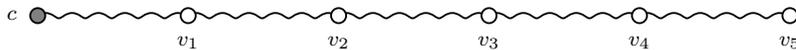
\end{example}

Observe that the chain described in the example above is controllable whether
some pairs of springs are equivalent or not. There are in fact many networks
that are
controllable thanks to their topology alone, regardless of the potentials. In particular, we have

\begin{example}[Physically relevant networks]We consider the network in \fref{f:carre}(a) and we start with
$C=\{c_1, \dots , c_4\}$ (\ie, we assume that the vertices in the first column are controllers).
Since no two vertices in the second column are \ctop, 
they each belong to a distinct $C$-equivalence class, and therefore by  \Cref{c:aloneinequiv}
 they are controllable (regardless of the potentials).
Let us now denote by $C'$ the set of all vertices
in the first two columns, which are controllable as we have just seen. Repeating 
the argument above, we obtain that the vertices in the third column are controllable.
Continuing like this, we gain control of the whole network. In the same way, one also
 easily obtains that the networks in
\fref{f:carre}(b-d) are controllable thanks to their topology alone.
\begin{figure}[ht]
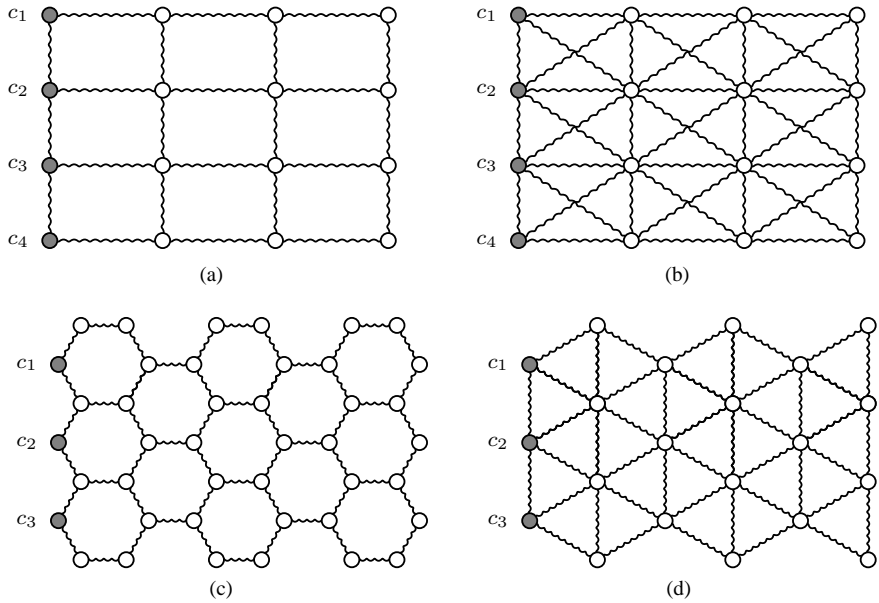

\input fig3
\caption{Four networks that are controllable by their topology alone,
regardless of the potentials (as long as they are polynomials of degree at
least 3).}\label{f:carre}
\end{figure}
\end{example}

\section{Limitations and extensions}

Our theory is local in the sense that the central tool
(\tref{t:mainprop}) involves only a controller and its first neighbors.
When we ``walk through the graph," starting from $\VV_*$ and taking
at each step control of more particles, we only look at the interaction
potentials
that involve the particles we already control and their first neighbors.
 We never look ``farther" in the graph.
This makes our criteria quite easy to apply, but this is also the
main limitation of our theory, as illustrated in

\begin{example}\label{ex:limites1}We consider the network shown in
\fref{f:exempleslimites}, where $c$ is a controller.
 If $V''_{cv_1}$ and $V''_{cv_3}$ are equivalent, then our theory fails to say
anything about the controllability of the network. In order to draw any
conclusion,
 one has to look at ``what comes next" in the network. Of course,
if the lower branch is an exact copy of the upper one
(\ie, if the interaction and pinning potentials
are the same), then the network is truly uncontrollable, and this is obvious
for symmetry reasons. However, without such an ``unfortunate"
symmetry, the network may still be controllable. Indeed, by
the study above, we know that the vector field $Y\equiv \partial_{q_{v_1}} +
\partial_{q_{v_3}}$
is in $\MM$. By commuting with ${X}_0$ and subtracting some contributions
already in $\MM$, one easily obtains that the vector field
\begin{equ}
U''_{v_1} \partial_{p_{v_1}} + U''_{v_3}\partial_{p_{v_3}} + V''_{v_1v_2}\cdot
(\partial_{p_{v_1}}-\partial_{p_{v_2}})+
V''_{v_3v_4}\cdot(\partial_{p_{v_3}}-\partial_{p_{v_4}})
\end{equ}
is in $\MM$.
 Observe that now the pinning potentials $U_{v_1}$ and $U_{v_3}$
as well as the interaction potentials $V_{v_1v_2}$ and $V_{v_3v_4}$
come into play.
Taking first commutators with $Y$ and then taking double commutators
among the obtained vector fields, one obtains further vector fields of the form
$\sum_{i=1}^4 g_i(q_{v_1}, q_{v_2}, q_{v_3}, q_{v_4})\partial_{p_{v_i}}$, where
the $g_i$ involve
derivatives and products of the potentials mentioned above. In many cases,
these
are enough to prove that the network in \fref{f:exempleslimites} is
controllable, even
though our theory fails to say so.
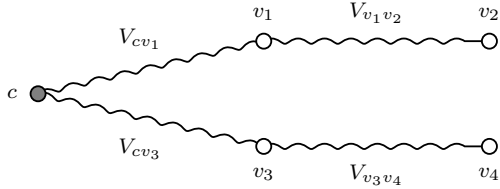
\begin{figure}[ht]
\centering
\scalebox{1}{\begin{tikzpicture}[segment amplitude=10pt, line
     width=0.7pt] 

 \draw[snake=coil, segment amplitude=1pt] (1,0)   --node[left=4pt, above=4pt, fill=white]{$V_{cv_1}$}   (4,0.7);
 \draw[snake=coil, segment amplitude=1pt] (1,0)   --node[left=4pt, below=4pt, fill=white]{$V_{cv_3}$}   (4,-0.7);
 \draw[snake=coil, segment amplitude=1pt] (4,-0.7)   -- node[below=4pt, fill=white]{$V_{v_3v_4}$}  (7,-0.7);
 \draw[snake=coil, segment amplitude=1pt] (4,+0.7)   -- node[above=4pt, fill=white]{$V_{v_1v_2}$}  (7,+0.7);

\path[draw=black, fill=gray] (1,0) circle (1mm)node[left=5pt]{${c}$};
\path[draw=black, fill=white] (4,0.7) circle (1mm)node[above=5pt]{${v_1}$};
\path[draw=black, fill=white] (4,-0.7) circle (1mm)node[below=5pt]{${v_3}$};
\path[draw=black, fill=white] (7,-0.7) circle (1mm)node[below=5pt]{${v_4}$};
\path[draw=black, fill=white] (7,0.7) circle (1mm)node[above=5pt]{${v_2}$};
\end{tikzpicture}}
\caption{The network used in \exref{ex:limites1}. If $V_{cv_3}$ is
  equivalent to $V_{cv_1}$ our theory does not allow to conclude, but
  the network might still be controllable.}\label{f:exempleslimites}
\end{figure}
\end{example}

One question that might arise is: why does only the  {\em second}
derivative of the interaction potentials enter the theory? The next example
shows that this issue is related to the notion of locality mentioned above.

\begin{example}\label{ex:limitesderivee}We consider the network
in
\fref{f:firstder}, where $c$ is a controller. We study the case
where
\begin{equs}[4]
V_{cv}(q_c-q_v) &= (q_c-q_v)^4, \quad&
 \quad U_{v}(q_v) &= q_v^6~,\\
V_{cw}(q_c-q_w) &= (q_c-q_w)^4 + a\cdot (q_c-q_w) , \quad &
 \quad U_{w}(q_w) &= q_w^6 +b\cdot q_w~,
\end{equs}
for some constants $a$ and $b$. The terms in $a$ and $b$ act as constant forces
on $c$ and $w$. Since $V''_{cv}\sim V''_{cw}$, the particles
$v$ and $w$ are equivalent with respect to $c$ by our definition. Thus, our
theory
fails to say anything. We seem to be missing
the fact that when $a\neq 0$, the particles $v$ and $w$ can be told apart due
to the first
derivative of the potentials. However, having $a\neq 0$
is not enough; the controllability of the network also depends on $b$.
Indeed, if $a=b$, the vector field $X_0$ is symmetric in $v$ and $w$, and
therefore
the network is genuinely uncontrollable. If now $a\neq b$,
 we have checked, by following a
different strategy of taking commutators, that the network
is controllable. Consequently, when two potentials have equivalent second
derivative, but inequivalent first derivative, no conclusion can be drawn
in general without knowing more about the network (here, it is one of the
pinning potentials,
but in more complex situations, it can be some subsequent springs).
\begin{figure}[ht]
\centering

\scalebox{1}{\begin{tikzpicture}[segment amplitude=10pt, line width=0.7pt]

 \draw[snake=coil, segment amplitude=1pt] (1,0)   --node[left=4pt, above=4pt, fill=white]{$V_{cv}$}   (4,0.7);

 \draw[snake=coil, segment amplitude=1pt] (1,0)   --node[left=4pt, below=4pt, fill=white]{$V_{cw}$}   (4,-0.7);

\path[draw=black, fill=gray] (1,0) circle (1mm)node[left=5pt]{${c}$};

\path[draw=black, fill=white] (4,0.7) circle (1mm)node[above=5pt]{${v}$};

\path[draw=black, fill=white] (4,-0.7) circle (1mm)node[below=5pt]{${w}$};

\end{tikzpicture}}
\caption{The network discussed in \exref{ex:limitesderivee}.}\label{f:firstder}
\end{figure}
\end{example}

Our theory applies only to strictly anharmonic systems, since we assume that
the interaction
potentials have degree at least 3. The next example shows what can go wrong if
we drop
this assumption. Again, this is related to the locality of our criteria.

\begin{example}\label{ex:anharmon}We consider the harmonic system shown in
\fref{f:contreexharmon}.
The vertex $c$ is a controller, and all the pinning potentials are equal and
harmonic,
 \ie, of the form $\lambda x^2/2$. The interaction potentials are also
harmonic. The spring $\{c, v_1\}$ has coupling constant
$2$, the springs $\{c, v_2\}$ and $\{v_2, v_3\}$ have coupling constant $1$
and the spring $\{v_3, v_4\}$
has coupling $k > 0$. Since $V_{cv_1}'' \equiv 2$ and $V_{cv_2}''\equiv 1$, the
particles
 $v_1$ and $v_2$ are inequivalent with respect to $c$.
Yet,  this is not enough to obtain that they are controllable
(unlike in the strictly anharmonic case covered by our theory). With standard
methods for
harmonic systems, it can
be shown that the network is controllable iff $k\neq 2$. When $k=2$, one of the
eigenmodes decouples from the controller $c$,
and no particle except $c$ is controllable. Thus, one cannot obtain that
$v_1$ and $v_2$ are controllable without knowing more about the potentials
that come farther in the graph.

\begin{figure}[ht]
\centering
\scalebox{1}{\begin{tikzpicture}[segment amplitude=10pt, line
     width=0.7pt] 

 \draw[snake=coil, segment amplitude=1pt] (1,0)   --node[left=3pt, above=4pt, fill=white]{${2}$}   (3,0.5);
 \draw[snake=coil, segment amplitude=1pt] (1,0)   --node[left=3pt, below=4pt, fill=white]{${1}$}   (3,-0.5);
 \draw[snake=coil, segment amplitude=1pt] (3,-0.5)   -- node[below=4pt, fill=white]{${1}$}  (5,-0.5);
 \draw[snake=coil, segment amplitude=1pt] (5,-0.5)   -- node[below=4pt, fill=white]{${k}$}  (7,-0.5);

\path[draw=black, fill=gray] (1,0) circle (1mm)node[left=5pt]{${c}$};
\path[draw=black, fill=white] (3,0.5) circle (1mm)node[above=5pt]{${v_1}$};
\path[draw=black, fill=white] (3,-0.5) circle (1mm)node[below=5pt]{${v_2}$};
\path[draw=black, fill=white] (5,-0.5) circle (1mm)node[below=5pt]{${v_3}$};
\path[draw=black, fill=white] (7,-0.5) circle (1mm)node[below=5pt]{${v_4}$};
\end{tikzpicture}}
\caption{A harmonic network that may or may not be controllable
depending on the value of the coupling constant $k$.}\label{f:contreexharmon}
\end{figure}
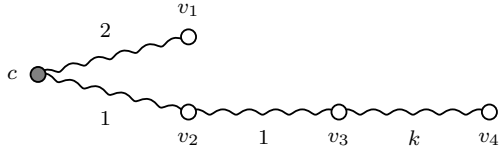
\end{example}

\begin{remark} As presented here, our method only works when 
the motion of each particle is 1D. To some extent, our results can be generalized
to higher dimensions. For example, one can check that in any dimension $r\geq 1$, 
the network of \fref{f:61} with potentials
 $V_k({\mathbf x}_k) = a_k~(x_{k,1}^2 + \dots + x_{k,r}^2 )^2$ , $k=1, \dots, n$,
is controllable when the $a_k$ are all distinct and non-zero. But for 
generic polynomial potentials, the situation is more complicated: 
taking multiple commutators does not always lead to tractable expressions
(in particular, we do not have the nice 
form \eref{e:scommvecteursdiagxu} for double commutators anymore).
Further research is needed to find an adequate method for general 
higher dimensional problems. For some networks with special topology
 (such as the one in \fref{f:carre}(a)  but not the ones in \fref{f:carre}(b-d)), simple conditions
can be found for controllability, even for non-polynomial potentials (see \cite{EHR}). 
 \end{remark}

\section{Comparison with other commutator techniques}
It is perhaps useful to compare the techniques used in this paper to
those used elsewhere: To unify notation, we consider the
hypoellipticity problem in the classical form
\begin{equ}
L=  X_0+\sum_{i>0} X_i^2~.
\end{equ}
In \cite{EPR1999}, the authors considered a chain, so that $\VV_*$ is
just the first and the last particle in the chain. Starting with
$\partial _{p_1}$ (the left end of the chain) one then forms (with
simplified notation, which glosses over details which can be found in
that paper)
\begin{equ}
  \partial _{q_1}=[\partial _{p_1},X_0]~,\quad
  \partial _{p_2}=(M_{1,2})^{-1}[\partial _{q_1},X_0]~,\quad
  \partial _{q_2}=[\partial _{p_2},X_0]~,
\end{equ}
and so on, going through the chain. Here, the particles are allowed
to move in several dimensions, and $M_{j,j+1}$ is  basically
the Hessian matrix of $V_{j, j+1}$. This technique
requires that $M_{j,j+1}$ be invertible, which implies some restrictions
on the potentials.

Villani \cite{Villani2009} uses another sequence of commutators:
\begin{equ}
  C_0=\{X_i\}_{i>0} ~,\quad C_{j+1} = [C_j,X_0] + \text{ remainder$_j$}~.
\end{equ}

With this superficial notation, the current paper uses again a walk
through the network, but the basic step involves double commutators of
the form
\begin{equ}
  \scomm{Z_1}{Z_2}
\end{equ}
with $Z_i$ typically of the form $\sum g_v(x_v)\partial_{p_v}$, where we use
abundantly that the $V_e$ are polynomials. This allows for
the ``fanning out'' of \fref{f:61} and is at the basis of our ability
to control very general networks. In particular, this shows that
networks with variable cross-section can be controlled.

\appendix
\section{Vandermonde determinants}
\begin{lemma}\label{l:vandermonde} Let $c_1, \dots, c_n\in \real$ be
distinct and non-zero, and let $s \geq 0$.
 Then, for all $k\in \{1, \dots, n\}$ there are constants
 $r_{1}, \dots, r_n \in \real$ such that for all $j=1, \dots, n$,
\begin{equ}
\sum_{i=1}^{n} r_i\ c_j^{i+s} = \delta_{jk}~.
\end{equ}
\end{lemma}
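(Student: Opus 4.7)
The plan is to recognize the claim as a statement about invertibility of a matrix that differs from a Vandermonde matrix only by a diagonal rescaling. Concretely, fix $s\ge 0$ and consider the $n\times n$ matrix $M$ with entries $M_{ji}=c_j^{i+s}$ for $i,j\in\{1,\dots,n\}$. The claim is precisely that for every $k$, the equation $M r = e_k$ (with $e_k$ the $k$-th standard basis vector) admits a solution $r=(r_1,\dots,r_n)^\top$. This is automatic once we show $M$ is invertible.

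To see invertibility, factor out a common power of $c_j$ from each row:
\begin{equ}
M_{ji} \;=\; c_j^{\,s+1}\cdot c_j^{\,i-1}~,
\end{equ}
so that $M = D\,V$, where $D=\mathrm{diag}(c_1^{s+1},\dots,c_n^{s+1})$ and $V$ is the classical Vandermonde matrix with entries $V_{ji}=c_j^{\,i-1}$. Then
\begin{equ}
\det M \;=\; \Bigl(\prod_{j=1}^n c_j^{\,s+1}\Bigr)\cdot \prod_{1\le i<j\le n}(c_j-c_i)~.
\end{equ}
Both factors are non-zero: the first because the $c_j$ are non-zero, the second because the $c_j$ are distinct (this is the standard formula for the Vandermonde determinant). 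Hence $M$ is invertible and $r = M^{-1}e_k$ gives the required coefficients.

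There is no real obstacle here beyond invoking the Vandermonde determinant formula and noting that distinctness and non-vanishing of the $c_j$ make both factors non-zero. One could alternatively give a direct argument: any linear relation $\sum_i r_i c^{i+s}=0$ holding for all $c\in\{c_1,\dots,c_n\}$ means the polynomial $c^{s+1}\sum_i r_i c^{i-1}$, of degree at most $n+s$, has $n$ distinct non-zero roots, so its coefficients $r_1,\dots,r_n$ must all vanish; this again yields invertibility of $M$ and the existence of the desired $r_i$.
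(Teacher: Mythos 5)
Your proof is correct and matches the paper's argument essentially verbatim: both factor the matrix $M_{ji}=c_j^{i+s}$ as a diagonal matrix times the classical Vandermonde matrix, invoke the Vandermonde determinant formula, and conclude invertibility from the $c_j$ being distinct and non-zero. The alternative root-counting argument you append is a nice bonus but not needed.
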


\begin{proof}
 We have that the Vandermonde determinant
\begin{equ}
\left|\begin{array}{llll} c_1^{s+1} & c_1^{s+2} & \cdots & c_1^{s+n}\cr
 c_2^{s+1} & c_2^{s+2} & \cdots & c_2^{s+n}\cr
  & \vdots &  & \vdots \cr
 c_n^{s+1} & c_n^{s+2} & \cdots & c_n^{s+n}
\end{array}\right| = \left(\prod_{i=1}^n c_i^{s+1}\right)
\left | \begin{array}{lllll} 1& c_1 & c_1^2 & \cdots & c_1^{n-1}\cr
1 & c_2 &c_2^2& \cdots & c_2^{n-1}\cr
{}  & \vdots & \vdots &{} & \vdots \cr
 1 & c_n & c_n^2 & \cdots & c_n^{n-1}
\end{array}\right| =
\prod_{i=1}^n c_i^{s+1}\prod_{j=i+1}^n (c_j-c_i)
\end{equ}
 is non-zero under our assumptions. Thus, the columns of this
matrix form a basis of $\real^n$, which proves the lemma.
\qed\end{proof}

\acknowledgement
We thank Ch.~Boeckle, J. Guillod, T. Yarmola, and M. Younan for
discussions and careful reading of the manuscript.
This research was supported by an ERC advanced grant ``Bridges'' and
the Fonds National Suisse.
\endacknowledgement

\bibliography{networks.bbl}

\def\Rom#1{\uppercase\expandafter{\romannumeral #1}}\def\u#1{{\accent"15
  #1}}\def\cprime{$'$} \def\cprime{$'$}
\begin{thebibliography}{10}

\bibitem{EH}
J.-P. Eckmann and M.~Hairer.
\newblock Non-equilibrium statistical mechanics of strongly anharmonic chains
  of oscillators.
\newblock {\em {Commun.} {Math.} {Phys.}\/} {\bf 212} (2000), 105--164.

\bibitem{EHR}
J.-P. Eckmann, M.~Hairer, and L.~Rey-Bellet.
\newblock Non-equilibrium steady states for networks of springs.
\newblock {\em In preparation\/} .

\bibitem{EPR1999b}
J.-P. Eckmann, C.-A. Pillet, and L.~Rey-Bellet.
\newblock Entropy production in nonlinear, thermally driven {H}amiltonian
  systems.
\newblock {\em J. Statist. Phys.\/} {\bf 95} (1999), 305--331.

\bibitem{EPR1999}
J.-P. Eckmann, C.-A. Pillet, and L.~Rey-Bellet.
\newblock Non-equilibrium statistical mechanics of anharmonic chains coupled to
  two heat baths at different temperatures.
\newblock {\em Comm. Math. Phys.\/} {\bf 201} (1999), 657--697.

\bibitem{EZ2004}
J.-P. Eckmann and E.~Zabey.
\newblock Strange heat flux in (an)harmonic networks.
\newblock {\em J. Statist. Phys.\/} {\bf 114} (2004), 515--523.

\bibitem{hairer_probabilistic_2005}
M.~Hairer.
\newblock A probabilistic argument for the controllability of conservative
  systems.
\newblock {\em Arxiv preprint math-ph/0506064\/} .

\bibitem{Hairer2masses2009}
M.~Hairer.
\newblock How hot can a heat bath get?
\newblock {\em Comm. Math. Phys.\/} {\bf 292} (2009), 131--177.

\bibitem{Ho}
L.~H{\"o}rmander.
\newblock {\em The Analysis of Linear Partial Differential Operators {\Rom
  1--\Rom 4}\/} ({New York}: {Springer}, 1985).

\bibitem{jurdjevic_geometric_1997}
V.~Jurdjevic.
\newblock {\em Geometric control theory\/} (Cambridge; New York, {NY}, {USA}:
  Cambridge University Press, 1997).

\bibitem{bellet_ergodic_2006}
L.~Rey-Bellet.
\newblock Ergodic properties of {M}arkov processes.
\newblock {\em Open Quantum Systems {II}\/}  (2006), 1–39.

\bibitem{LTh00}
L.~Rey-Bellet and L.~E. Thomas.
\newblock Asymptotic behavior of thermal nonequilibrium steady states for a
  driven chain of anharmonic oscillators.
\newblock {\em Comm. Math. Phys.\/} {\bf 215} (2000), 1--24.

\bibitem{Villani2009}
C.~Villani.
\newblock Hypocoercivity.
\newblock {\em Mem. Amer. Math. Soc.\/} {\bf 202} (2009), iv+141.

\end{thebibliography}


\end{document}